\newtheorem{theorem}{Theorem}
\newtheorem{proposition}{Proposition}
\begin{document}


\title{On the Performance of Visible Light Communications Systems with Non-Orthogonal Multiple Access }

\author{
Hanaa~Marshoud,~\IEEEmembership{Student~Member,~IEEE},~Paschalis~C.~Sofotasios,~\IEEEmembership{Member,~IEEE}, Sami~Muhaidat,~\IEEEmembership{Senior~Member,~IEEE},~and~George~K.~Karagiannidis,~\IEEEmembership{Fellow,~IEEE}

\thanks{H.~Marshoud is with the Department of Electrical and Computer Engineering, Khalifa University, PO Box 127788,
Abu Dhabi, UAE (email: hanaa.marshoud@kustar.ac.ae).}%
\thanks{P. C. Sofotasios is with the Department of Electrical and Computer Engineering, Khalifa
University, PO Box 127788, Abu Dhabi, UAE, and  with the Department of Electronics and Communications Engineering, Tampere University of Technology, 33101 Tampere, Finland (e-mail: p.sofotasios@ieee.org).}%
\thanks{S.~Muhaidat is with the Department of Electrical and Computer Engineering, Khalifa University, PO Box 127788, Abu Dhabi, UAE (email: sami.muhaidat@kustar.ac.ae), and with Institute for Communication Systems, University of Surrey, GU2 7XH, Guildford, United Kingdom (email: muhaidat@ieee.org).}%
\thanks{G.~K.~Karagiannidis is with  the Department of Electrical and Computer Engineering, Aristotle University of Thessaloniki, 54 124 Thessaloniki, Greece (email:  geokarag@auth.gr).}
}

\maketitle

\begin{abstract}
Visible light communications (VLC)  have been recently proposed as a promising and efficient solution to indoor ubiquitous broadband connectivity.
In this paper, non-orthogonal multiple access, which has been recently proposed as an effective  scheme for fifth generation ($5$G) wireless networks, is  considered in the context of VLC systems, under different channel uncertainty models. To this end, we first derive a novel closed-form expression for the bit-error-rate (BER) under perfect channel state information (CSI). Capitalizing on this, we quantify the effect of noisy and outdated CSI by deriving a simple approximated expression for the former and a tight upper bound for the latter. The offered results are corroborated by respective results from extensive Monte Carlo simulations and are used to provide useful  insights on the effect of imperfect CSI knowledge on the system performance. It was shown that, while noisy CSI leads to slight degradation in the BER performance, outdated CSI can cause detrimental performance degradation if the order of the users' channel gains change as a result of mobility.
\end{abstract}

\begin{IEEEkeywords}
Visible light communications, non-orthogonal multiple access, imperfect channel state information, bit-error-rate, dimming control.
\end{IEEEkeywords}

\section{Introduction}
Driven by the vast increase in the global demand for increased wireless data connectivity and the recent advances  in solid-state lighting, visible light communication (VLC) has evolved significantly as a potential candidate to the wireless data explosion dilemma. Based on this, it has recently attracted the attention of both academia and industry   as an effective  complementing  technology to traditional radio frequency (RF) communications \cite{VLC_mag, haas_book, VLCmarket,haas3}. 
In VLC systems,   light emitting diodes  (LEDs) are utilized for  data transmission, where the intensity of the LED light is modulated at particularly  high switching rate that cannot be perceived by the human eye. This process is known as intensity modulation (IM). Then, at the receiver site,  a photo detector (PD) is used to convert the variations in the received light intensity into electrical current that is subsequently used for data recovery \cite{state-of-the-art,haas2,murat1}.

\subsection{Related Literature}
%

As a promising   broadband technology, VLC is expected to   provide remarkably high speed indoor communication and support ubiquitous connectivity. To this end, several   multiple access schemes have been  proposed for VLC systems, including, carrier sense multiple access (CSMA), orthogonal frequency division multiple access (OFDMA) and code division multiple access (CDMA) \cite{survey2}. In more details, in  CSMA-based systems, each LED is required to sense the channel before attempting to transmit in order to avoid collisions. Yet, when a LED's transmitted signal  is undetectable by others, the “hidden terminal” problem  arises leading to considerable performance degradation \cite{CSMA1}. A full-duplex carrier sense multiple access (FD-CSMA) protocol was proposed in \cite{CSMA_FD} to avoid  the “hidden terminal” problem. This is realised by considering   downlink transmissions  as  busy medium for the uplink channel, leading to reduced collisions.

However, carrier sensing needed for CSMA is not trivial in VLC due to line-of-sight (LOS) transmissions along with  the need for request-to-send/clear-to-send (RTS/CTS) symbols,  which increases the overall signalling overhead \cite{csma2}.

In the same context, OFDMA is an effective multiple access scheme but it  cannot be applied directly  to VLC systems due to the restriction of positive and real signals imposed by IM and the illumination requirements. Based on this, DC-biasing and clipping techniques have been proposed to adapt OFDMA to VLC systems. To this end, optical orthogonal frequency division multiple access (O-OFDMA) was proposed  as a modified  OFDMA scheme by  asymmetrically clipping the transmitted OFDM signal  at zero level in order to satisfy the positivity constraint  \cite{OFDMA1}. The performance of O-OFDMA was compared to optical orthogonal frequency division multiplexing interleave division multiple access (O-OFDM-IDMA); it was shown that while  O-OFDM-IDMA outperforms O-OFDMA in terms of power efficiency,  O-OFDMA exhibits the benefit of reduced decoding complexity. Moreover,  O-OFDMA provides lower peak-to-average power ratio (PAPR) compared to O-OFDM-IDMA. Yet,  adapting OFDMA in order to cope with  VLC requirements  leads to significant reduction of  spectral efficiency, which is a disadvantage in emerging communications that require significantly increased throughputs.

Likewise, CDMA is another multiple access scheme that has been proposed for VLC systems,  exploiting optical orthogonal codes (OOC) \cite{OOC1,OOC2} to allow multiple users to access the channel using full spectrum and time resources. Thus, it can provide enhanced spectral efficiency compared to  OFDMA. An experimental optical code-division multiple access (OCDMA) based VLC system was presented in \cite{CDMA1}.  It was shown that the  major drawback of OCDMA is the need for long OOC codes, which leads to a reduction of the theoretically  achievable data rates. This issue was addressed in \cite{cdma4} where a code-cycle modulation (CCM) technique was proposed  to enhance the spectral efficiency of OCDMA by using different cyclic shifts of the spreading sequence assigned to the different users. Yet another limitation of OCDMA is the poor correlation characteristics in the OOC codes.  This problem was addressed  in  \cite{CDMA2} by means of  a synchronization  mechanism that aims to  improve the performance of OCDMA.

Non-orthogonal multiple access (NOMA) was proposed in \cite{NOMA_VLC} as a spectrum-efficient multiple access scheme for downlink VLC systems. In NOMA, the signals of  different users  are superimposed in the power domain by allocating different power levels based on the channel conditions of each user. Thus, users can share the entire frequency and time  resources leading to increased  spectral efficiency. NOMA allocates higher  power levels  to users with worse channel conditions than those with good channel conditions. As a result, the user with the highest allocated power will be  able to decode directly its signal, while treating the signals of other users as noise. The other users in the system  perform  successive interference cancellation (SIC) for the multi-signal separation, prior to decoding their signals.
The concept of NOMA has recently gained interest in RF systems as a candidate for 5G and long term evolution–advanced (LTE-A) systems \cite{whiltepaper, NOMA-5g}.
The performance of NOMA downlink system with randomly deployed users was investigated in \cite{noma1}, where NOMA was shown to provide improved outage probability when power allocation is carefully designed. However, it was shown that the performance gains of NOMA are degraded in low signal-to-noise-ratio (SNR) scenarios. In \cite{NOMA-relay}, NOMA has been applied to  multiple-antenna relaying networks, where  the outage behavior of the mobile users has been analyzed. It was shown that NOMA leads to improved spectral efficiency and  fairness compared to orthogonal multiple access schemes.  The application of NOMA to  multiple-input-multiple-output (MIMO) configrations has been studied in \cite{NOMA-MIMO1, NOMA-MIMO2}. In MIMO-NOMA, two types of interference cancelation techniques are required: 1) SIC for the intra-beam demultiplexing of users having the same precoding weights, and 2) inter-beam interference cancellation for users with different precoding weights. The impact of channel state information (CSI)  on the performance of  NOMA was first examined in \cite{NOMA-SCI1}, where the ergodic capacity maximization problem was considered under  total transmit power constraints.  In \cite{NOMA-CSI2},  outage probability expressions for a downlink NOMA system have been derived under different channel uncertinity models based on imperfect CSI and second order statistics.

The performance of NOMA-VLC was analyzed in \cite{NOMA_VLC1,NOMA_haas} in terms of coverage probability and ergodic sum rate and it was  shown that NOMA enhances the system capacity compared to time division multiple access (TDMA). Likewise, it was shown in  \cite{NOMA_VLC2} that  NOMA also  outperforms  OFDMA in terms of the achievable data rate in the downlink of VLC systems.
It is noted here that the majority of the reported contributions  on optical NOMA  assume perfect knowledge of the channel fading coefficients. However, in  practical communication scenarios, these coefficients must be first estimated and then used in the detection process. Yet,  channel estimation  can not  always be perfect in practice, which results to subsequent decoding  errors that degrade the overall system performance. VLC channel estimation errors have been considered in \cite{Joint} for the  joint optimization of precoder and equalizer in  optical MIMO systems, while the impact of noisy CSI on the performance of different MIMO precoding schemes was investigated in \cite{marshoud}. Likewise, the contribution  in \cite{coordinated} analyzed the impact of CSI errors on  a multiuser VLC downlink network, where the corresponding system performance was evaluated under two channel  uncertainty models, namely  noisy and outdated CSI.  

It is also recalled that in order to consider  NOMA in commercial implementations, it should be first ensured that it satisfies  the requirements imposed by the illumination functionality of VLC systems \cite{dimming4}.  According to to the IEEE Standard $802.15.7$ \cite{ook_standard}, VLC wireless networks are required to support light dimming, allowing the control of the perceived light brightness  according to the users' preference. Hence, integrating  efficient dimming techniques into VLC systems is vital for energy savings as well as for aesthetic and comfort purposes, rendering a wide implementation of VLC systems more rational. Various dimming methods have been proposed in the literature to incorporate  data transmission  into dimmable  light intensities by means of different modulation and coding schemes.
In general,  brightness control can be achieved by two different techniques: continuous current reduction (CCR) and pulse-width modulation (PWM). In CCR, also known as analog dimming,  dimming control is realized
by changing the forward current of the LED,  while in  PWM, also known as digital dimming,  the forward current remains constant while the duty cycle of the signal is varied in order to meet the dimming requirements. Analog and digital dimming have been applied to different modulation schemes in VLC, such as on-off keying (OOK) and pulse position modulation (PPM) \cite{dimming3}. Moreover,  dimming control was implemented in a VLC-OFDM system in \cite{dimming5} where it was shown  that CCR achieves higher luminous efficacy, whereas  PWM  leads to throughput degradation.  It is noted here that supporting  dimming control in NOMA-VLC systems can be rather challenging because NOMA is fundamentally based on dividing  the LED power among the different users in the network, which renders the corresponding  system performance highly sensitive to any reduction in transmit power. Based on the above, it is shown  in detail that the present work quantifies  the impact of channel estimation errors on the  performance of indoor NOMA-VLC systems under noisy and outdated channel uncertainty models as well as under analog and digital dimming techniques. A detailed outline of the contribution of this work is provided in the following subsection.

\subsection{Contribution}
In the present paper, we consider NOMA as a multiple access scheme for the downlink of indoor VLC  networks  due to the following advantageous characteristic \cite{NOMA_VLC}:
\begin{itemize}
\item NOMA is efficient in  multiplexing a small number of users, which is the case in VLC sytems where an LED is regarded as a small cell that serves few users in room environments.
\item NOMA provides superior performance gain at high SNR scenarios \cite{noma1}, and thus, it is suitable for VLC links as they  typically operate at rather high SNR due to the existence of strong LOS component and a short propagation distance.
\end{itemize}

In this context, the contribution of this paper is summarized below:
\begin{enumerate}
  \item We investigate the error rate  performance of NOMA-VLC systems, and derive an exact analytic expression for the  bit-error-rate (BER) for an arbitrary number of users  for the case of perfect CSI.
  \item We quantify the impact of CSI errors on the system performance under two channel uncertainty models, namely   noisy CSI, and outdated CSI that may result from the mobility of the indoor users. In this context, we  derive a closed-form approximated expression   for the BER under noisy CSI, as well as a tight  upper bound for the BER under outdated CSI.
  \item We analyze the effect of dimming support on the BER performance of NOMA-VLC systems. In particular, two different dimming schemes are considered, OOK analog dimming and variable OOK (VOOK) dimming.

\end{enumerate}

To the best of our knowledge, the above topics have not been previously  investigated in the open technical literature, including similar analyses in conventional radio communications.

\subsection{Structure}
The remainder of the paper is organized as follows: Section \ref{sec:model} describes the channel and system model of an indoor VLC downlink network. Section \ref{sec:analysis} analyzes the BER performance under perfect CSI, while Section \ref{sec:analysis_imperfect} investigates the performance of NOMA under two different cases for CSI errors. Numerical results and related discussion are presented  in Section \ref{sec:results} while closing remarks are provided   in  Section \ref{sec:conc}.

\section{Channel and System Model}
\label{sec:model}
We consider a single-LED downlink VLC system deployed in an indoor environment. The LED has a dual function of illumination and communication, and serves $N$ users  simultaneously, by modulating the intensity of the emitted light according to the data received through a power line communications (PLC) backbone network. Also,  all  users are equipped with a single PD, that performs direct detection to extract the transmitted signal from the received optical carrier. This is realized considering unipolar OOK modulation  due to its popularity  in VLC systems \cite{ook_standard,murat2}.

\subsection{The VLC Channel}
The current set up is based on LOS communication scenarios  as illustrated in Fig. \ref{fig:channel},  since multipath delays resulting from reflections and diffuse refractions are typically negligible in indoor VLC settings \cite{modeling_LED}.  The channel between user $U_i$ and the corresponding LED is given by
\begin{equation}\label{equ:hi}
h_{i}= \begin{cases}\frac{A_i}{d^2_{i}}R_o(\varphi_{i})T_s(\phi_{i})g(\phi_{i})\cos(\phi_{i}), \qquad & 0\leq\phi_{i}\leq \phi_{c}\\
0,&\phi_{i}>\phi_{c}
\end{cases}
\end{equation}
\noindent where $i$ =1, 2, 3, \dots ,$N$, $A_i$ represents the receiver PD area, $d_{i}$ accounts for the distance between the transmitting LED  and the \emph{i}-th receiving PD, $\varphi_{i}$ is the angle of emergence with respect to the transmitter axis, $\phi_{i}$  is the angle of incidence with respect to the receiver axis, $\phi_{c}$ is the field of view (FOV) of the PD, $T_s(\phi_{i})$ is the gain of optical filter and $g(\phi_{i})$ is the gain of the optical concentrator, which is expressed as

\begin{figure}[h]
\center
\includegraphics[width=3.5in,height=2.5in]{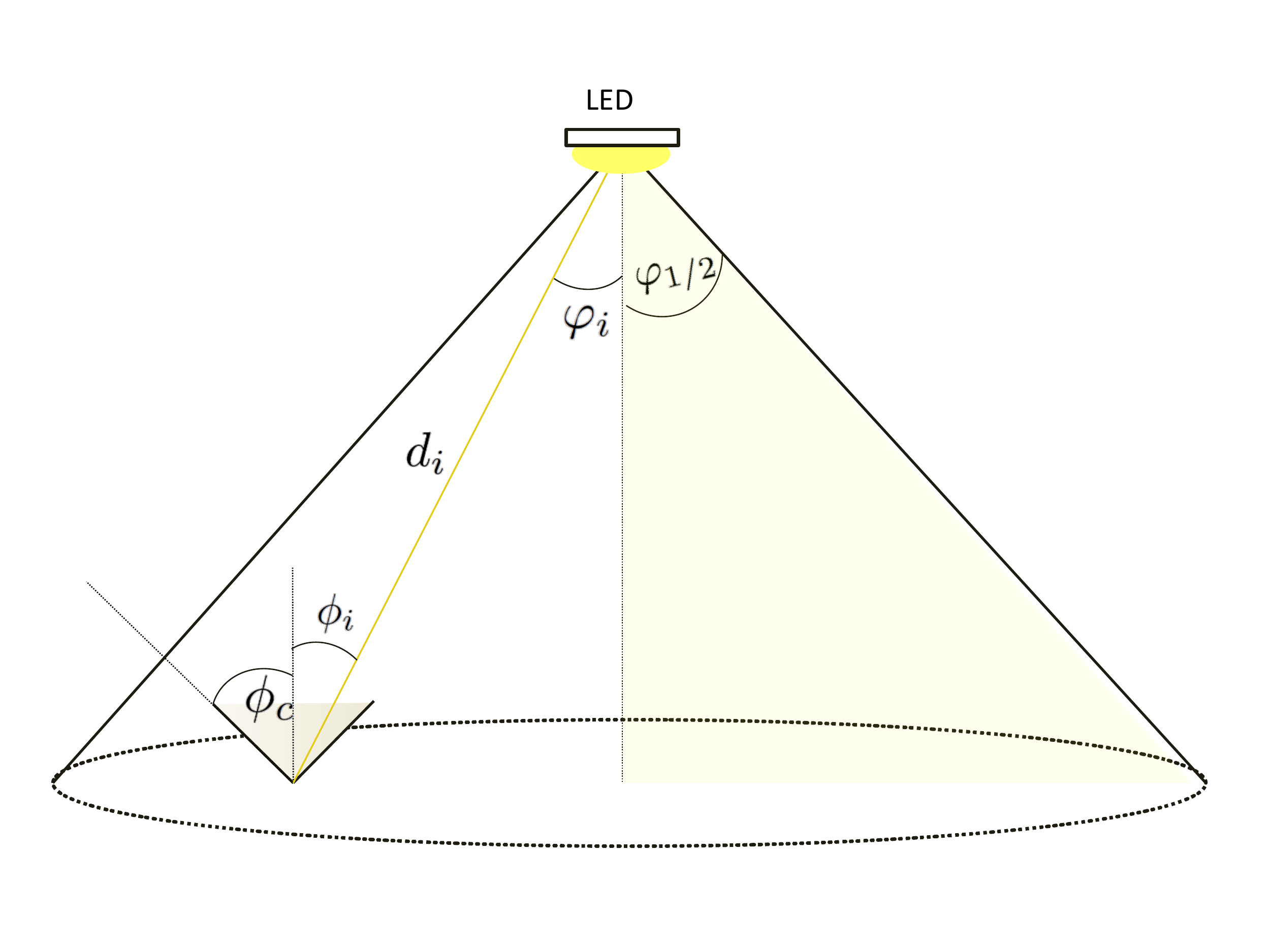}
\caption{VLC channel model.}
\label{fig:channel}
\end{figure}

\begin{equation}\label{equ:g}
g(\phi_{i}) = \begin{cases} \frac{n^2}{\sin^2(\phi_{c})},\qquad \qquad & 0\leq\phi_{i}\leq \phi_{c}\\
 0, \qquad &\phi_{i}>\phi_{c}
 \end{cases}
\end{equation}
\noindent where $n$ denotes the corresponding  refractive index. Moreover, $R_o(\varphi_{i})$ in (\ref{equ:hi}) is the Lambertian radiant intensity of the transmitting LEDs, which  can be expressed as
\begin{equation}\label{equ:R_o}
R_o(\varphi_{i}) = \frac{(m+1)}{2\pi}\cos^m(\varphi_{i})
\end{equation}
where $m$ is the order of Lambertian emission, calculated as
\begin{equation}\label{equ:m}
 m =\frac{- \ln(2)}{\ln(\cos(\varphi_{1/2}))}
\end{equation}
with $\varphi_{1/2}$ denoting  the transmitter semi-angle at half power.
To this effect, the  receiver-site noise is drawn from   a circularly-symmetric Gaussian distribution of zero mean and variance
\begin{equation}\label{equ:totalnoise}
\sigma_{n}^2 = \sigma_{sh}^2 + \sigma_{th}^2
\end{equation}
where $\sigma_{sh}^2 $ and $ \sigma_{th}^2$ are the variances of the shot noise and  thermal noise, respectively.

The shot noise in an optical wireless channel results from the high rate physical photo-electronic conversion process,  with variance at the \emph{i}-th PD
\begin{equation}\label{equ:shot}
\sigma_{sh_i}^2=2qB\left(\gamma h_{i}x_{j}+I_{bg}I_2 \right)
\end{equation}
\noindent where \textit{q} is the electronic charge, $\gamma$ is the detector responsivity, \textit{B} is the corresponding bandwidth, $I_{bg}$ is background current, and $I_2$ is the noise bandwidth factor. Furthermore,  the thermal noise is generated within the transimpedance receiver circuitry and its variance is given by
\begin{equation}\label{equ:thermal}
\sigma_{th_i}^2 = \dfrac{8 \pi K T_k}{G}\eta AI_2B^2 + \dfrac{16 \pi^2K T_k \Gamma }{g_m}\eta^2 A^2 I_3 B^3
\end{equation}
where $K$ is Boltzmann's constant, $T_k$ is the absolute temperature, \textit{G} is the open-loop voltage gain, \textit{A} is the PD area, $\eta$ is the f\mbox{}ixed capacitance of the PD per unit area, $\Gamma$ is the field-effect transistor (FET) channel noise factor, $g_m$ is the FET transconductance, and $I_3$ = 0.0868 \cite{Fundamental}.

\begin{figure*}[   \widetilde{}]
\center
\vspace{-2cm}
\includegraphics[width=6in,height=5in]{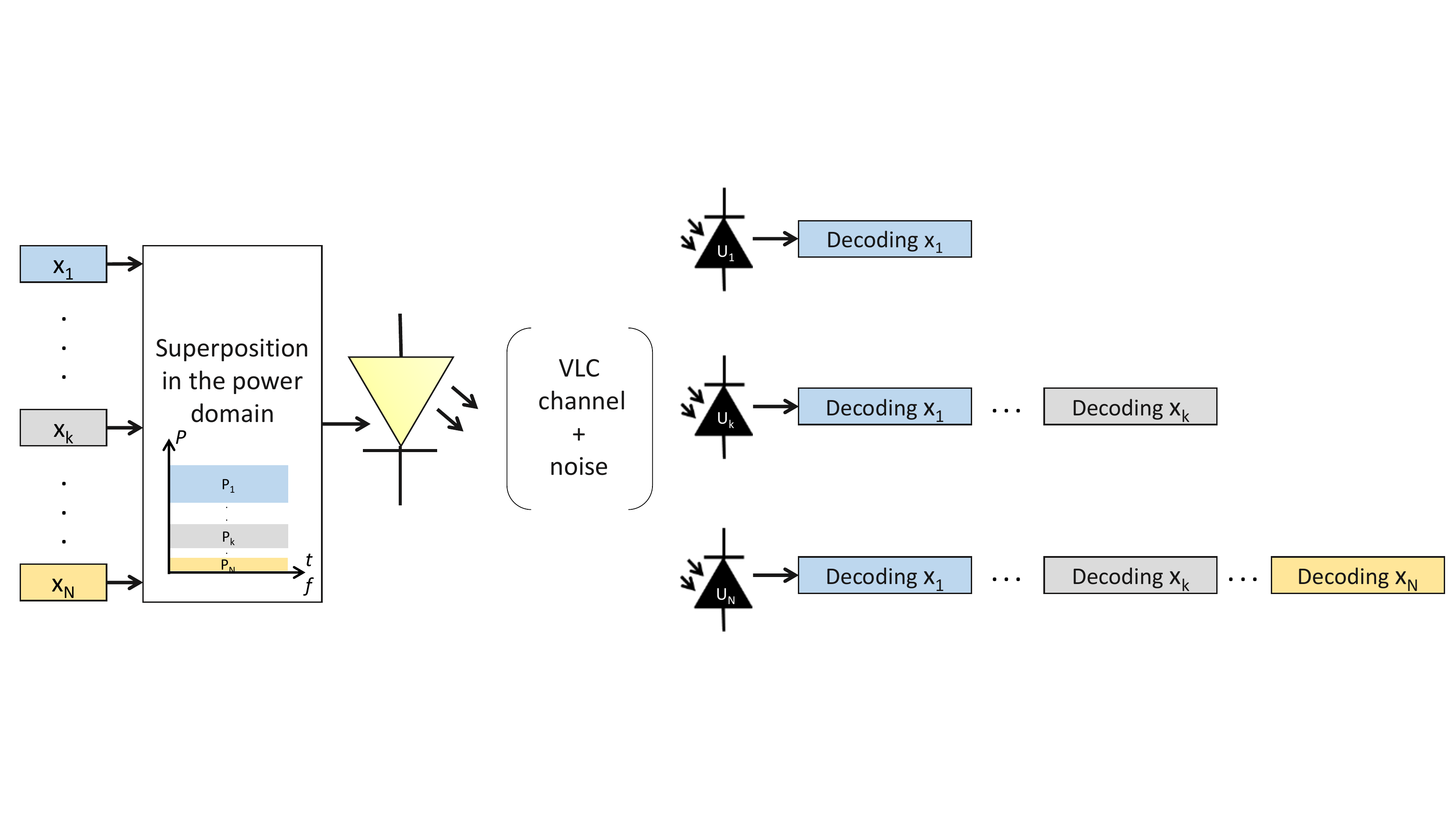}
\vspace{-3cm}
\caption{NOMA-VLC downlink. }
\label{fig:system_model}
\end{figure*}

\subsection{NOMA Transmission}
Without loss of generality, we assume that the  users $U_1$, \ldots , $U_{N}$ are sorted in an ascending order according to their channels, i.e.   $h_1 \leq h_2 \leq \cdots \leq h_N$.  Using NOMA, the LED transmits the real and non-negative signals
$s_1$, \ldots , $s_N$  with associated power values $P_1$, \ldots , $P_N$, where $s_i$ conveys information intended for user $U_i$, as shown in Fig. \ref{fig:system_model}. Unless otherwise stated, the term power  refers to the optical power which is directly proportional to the LED driving current.  To this effect, the $N$ transmitted signals are superimposed in the power domain as follows:
\begin{equation}\label{equ:transmitted}
x=\sum \limits_{i=1}^{N} P_i s_i
\end{equation}
and the  LED total transmit power is $P_{LED}= \sum_{i = 1}^{N}  P_i$. At the PD site, direct detection of the received signal is performed based on the received optical power and the  received signal at user $U_k$ can be expressed as
\begin{equation}\label{equ:received}
y_k= \gamma h_k \sum \limits_{i=1}^{N} P_i s_i     + n_k
\end{equation}
where  $\gamma$ is the detector responsivity and $n_k$ denotes zero-mean additive white Gaussian noise (AWGN) with variance $\sigma_n^2$. Henceforth,   $\mathcal{N}(\mu,\sigma^2)$ represents the probability density function (PDF) of Gaussian distribution with mean $\mu$ and variance $\sigma^2$.
Based on this and assuming   unipolar OOK signals, the PDF of the received signal at $U_k$ can be represented as

\begin{equation}
\begin{cases}
f_{Y_k|S_k} (y_k | s_k =0) = \mathcal{N}_{y_k}( \gamma h_k \sum \limits_{\substack{i=1\\i\neq k} }^{N}   P_i s_i ,{\sigma_n}^2)\\
f_{Y_k|S_k} (y_k | s_k =1) = \mathcal{N}_{y_k}( \gamma h_k (P_k + \sum \limits_{\substack{i=1\\i\neq k} }^{N}   P_i s_i ) ,{\sigma_n}^2).
\end{cases}
\end{equation}

It is recalled that the  multi-user interference at user $U_k$ can be eliminated by means of SIC. Based on this, in order to decode its own signal, $U_k$ needs to successfully  decode and subtract the signals of all other users with lower decoding order, i.e. $s_1$, \ldots , $s_{k-1}$. As a result, the  residual interference from $s_{k+1}, \ldots , s_{N}$ becomes insignificant and can be treated as noise.  

In order to facilitate SIC decoding, the LED allocates higher  transmission power to users with poor channel gains. The simplest power allocation scheme is the fixed power allocation (FPA), where  the associated  power of the \emph{i}-th sorted user is set to
\begin{equation}
\label{equ:FPA}
P_i = \rho P_{i-1}
\end{equation} with $\rho$ denoting  the power allocation factor ($0 < \rho < 1$).
According to FPA, the  power allocated to user $U_i$ is reduced at the increase of $h_i$  because  users with good channel conditions require lower power levels  to successfully decode their desired signals, after canceling the interference from  the signals of the users with lower decoding order. This is the fundamental principle of NOMA and has been shown to provide remarkable performance gains in RF-based communications \cite{NOMA-5g}.  
\section{NOMA-VLC with  perfect CSI}
\label{sec:analysis}
It is recalled that accurate CSI is of paramount importance in conventional and emerging communications as encountered imperfections in practical deployments  lead to significant degradation of the overall system performance. This is also the case  in VLC; therefore, in  this section, we derive a closed form expression for the  BER   of  NOMA-VLC systems employing unipolar OOK under the assumption of perfect knowledge of the channel coefficients  and ideal time synchronization.

\begin{theorem}
Given that user $U_k$  attempts to cancel  the first $k-1$ signals from the aggregate received signal in succession, the BER of  $U_k$  achieved by the NOMA scheme can be written as
\begin{equation}
\begin{aligned}\label{equ:BER1}
\mbox{Pr}_{e_k} = \sum_{e_{k-1= -1,0,1}}  \ldots     \sum_{e_{1= -1,0,1}}    \mathcal{P}(e_{k-1}|e_1, \dots ,e_{k-2})
 \enspace \mathcal{P}(e_{k-2}|e_1, \ldots ,e_{k-3})  \enspace  \cdots \enspace  \mathcal{P}(e_1)  \mbox{Pr}_{e_k}|e_1, \ldots ,e_{k-1}
\end{aligned}
\end{equation}

It can be inferred from (\ref{equ:BER1})  that    the probability of error in decoding the $k^{th}$ user signal depends on the detections of the signals $1$ to $k-1$ that are performed during SIC stages. So the probability of error in decoding the $k^{th}$ signal equals the  conditional error probability  in decoding signal $k$, conditioned on the error probabilities of the  previous detection stages $(\mbox{Pr}_{e_k}|e_1, \ldots ,e_{k-1})$, multiplied by the error probabilities of all the previous  detection stages,
where $ \mathcal{P}({e_i}|{e_1}, \ldots ,{e_{i-1}})$ is the bit-error probability (BEP) of the $i^{th}$  detection stage conditioned on the  previous $1$ to $i-1$ BEPs, which is represented as

\begin{equation*}
\label{equ:BER1step2}
\mathcal{P}({e_i}|{e_1}, \ldots ,{e_{i-1}}) = \begin{cases} 1-{\rm Pr}_{{e_i}|{e_1}, \ldots ,{e_{i-1}}} & {e_i} = 0 \\ \frac{1}{2}{\rm Pr}_{{e_i}|{e_1}, \ldots ,{e_{i-1}}} & {e_i} = -1,1
 \end{cases}
\end{equation*}
with ${e_i} = \hat{s_j} - s_j $ denoting the error in detecting the \emph{i}-th OOK signal, 
and $\mbox{Pr}_{{e_k}}|{e_1}, \ldots ,{e_{k-1}}$ is the error probability in decoding the \emph{k}-th signal conditioned on the previous detections, namely
\begin{equation}\label{equ:BER2}
\begin{split}
\mbox{Pr}_{{e_k}}|{e_1}, \ldots ,{e_{k-1}}   =& \frac{1}{2^{N-k+1}} \sum_{i=1}^{2^{N-k}} \mathcal{Q}\left(\frac{  \gamma h_k}{\sigma_n} \left(\frac{P_k}{2} - \sum_{j=1}^{k-1} {e_j} P_j - \sum_{l=k+1}^{N} P_l A_{il}\right)\right) \\
& +  \frac{1}{2^{N-k+1}} \sum_{i=1}^{2^{N-k}} \mathcal{Q}\left(\frac{ \gamma h_k}{\sigma_n} \left(\frac{P_k}{2} + \sum_{j=1}^{k-1} {e_j} P_j + \sum_{l=k+1}^{N} P_l A_{il}\right) \right)
\end{split}
\end{equation}
where $\mathcal{Q}(x)=\frac{1}{\sqrt{2 \pi}} \int_{x}^{\infty} e^{-\frac{y^2}{2}} dy$ denotes  the one dimensional Gaussian $Q-$function and the term $\sum_{j=1}^{k-1} {e_j} P_j$ represents the potential residual interference caused by detection error in the decoding of  $s_1$, \ldots , $s_{k-1}$. Moreover, $\sum_{l=k+1}^{N} P_l A_{il}$ corresponds  to the interference  caused by  $s_{k+1}$, \ldots , $s_N$, where the elements of the matrix

\begin{equation}
\label{equ:A}
\begin{aligned}
\textbf{A}= \left[\begin{array}{ccc}
    A_{1\enspace k+1}       &  \dots & A_{1\enspace N}   \\
    A_{2\enspace k+1}      &  \dots & A_{2\enspace N}   \\
        \vdots        &  \vdots &\vdots    \\
    A_{2^{N-k}\enspace k+1}       &  \dots & A_{2^{N-k}\enspace N} \\
\end{array} \right]
= \left[\begin{array}{cccc}
    0     & 0 & \dots  & 0  \\
    0         & 0 &  \dots  & 1   \\
         \vdots    & \vdots    &\vdots  &\vdots  \\
    1       & 1 &   \dots  & 1 \\
\end{array} \right].
\end{aligned}
\end{equation}
demonstrate the possible combinations of interference depending on the transmitted OOK vectors.
\end{theorem}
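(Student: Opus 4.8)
The plan is to follow the successive-interference-cancellation (SIC) procedure executed at the receiver of $U_k$ stage by stage, and then to assemble the overall error probability by the law of total probability. I would begin from the received signal in \eqref{equ:received} and model each SIC stage as a binary OOK threshold decision. When $U_k$ decodes the $i$-th signal ($1 \le i \le k-1$), it has already formed estimates $\hat s_1,\ldots,\hat s_{i-1}$ and subtracted $\gamma h_k\sum_{j<i}P_j\hat s_j$ from $y_k$; writing $e_j=\hat s_j-s_j\in\{-1,0,1\}$, this cancellation leaves a residual interference term $\gamma h_k\sum_{j=1}^{i-1}e_jP_j$ together with the still-undecoded contributions of $s_{i+1},\ldots,s_N$. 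The decision statistic is therefore a Gaussian variable whose mean is shifted by this residual interference, and comparing it against the nominal midpoint threshold $\gamma h_k P_k/2$ produces the conditional error probability of stage $i$.

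For the final ($k$-th) stage I would compute $\mathrm{Pr}_{e_k}|e_1,\ldots,e_{k-1}$ explicitly. Conditioning on the previous error pattern fixes the residual term $\sum_{j=1}^{k-1}e_jP_j$, while the future symbols $s_{k+1},\ldots,s_N$ remain unknown and equiprobable; enumerating their $2^{N-k}$ binary combinations is exactly what the rows of the matrix $\mathbf A$ in \eqref{equ:A} encode, so the interference they inject is $\sum_{l=k+1}^{N}P_lA_{il}$ for the $i$-th combination. For each such combination, and for each of the two equiprobable transmitted values $s_k\in\{0,1\}$, the OOK error probability is a Gaussian tail, i.e. a $\mathcal Q$-function evaluated at $\tfrac{\gamma h_k}{\sigma_n}\bigl(\tfrac{P_k}{2}\mp(\,\cdot\,)\bigr)$; averaging over the $2^{N-k}$ combinations and the two symbol hypotheses with the common weight $2^{-(N-k+1)}$ yields \eqref{equ:BER2}. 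The sign asymmetry between the two $\mathcal Q$-terms merely reflects the two hypotheses $s_k=0$ and $s_k=1$, and after the averaging it is immaterial since the sum ranges over all interference patterns.

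To obtain \eqref{equ:BER1} I would then invoke the chain rule for the jointly generated error pattern $(e_1,\ldots,e_{k-1})$. Because the stages are executed sequentially, $\mathrm{Pr}[e_1,\ldots,e_{k-1}]=\mathcal P(e_{k-1}|e_1,\ldots,e_{k-2})\cdots\mathcal P(e_2|e_1)\mathcal P(e_1)$, and summing the conditional final-stage error over all $3^{k-1}$ patterns, each weighted by this product, gives the stated expression. It remains to justify the piecewise law for $\mathcal P(e_i|e_1,\ldots,e_{i-1})$: the probability of a \emph{correct} stage ($e_i=0$) is $1-\mathrm{Pr}_{e_i|\cdots}$, and by the symmetry of the zero-mean Gaussian noise together with the equiprobability of the transmitted bit, an error is equally likely to be a $0\!\to\!1$ or a $1\!\to\!0$ flip, so each direction $e_i=\pm1$ carries half the stage error probability.

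The main obstacle I anticipate is the bookkeeping of the error-propagation structure — in particular, verifying that conditioning on $(e_1,\ldots,e_{i-1})$ renders the $i$-th stage statistic Gaussian with precisely the claimed mean shift, and that the sequential decisions factor cleanly through the chain rule so that the enumeration over $\{-1,0,1\}^{k-1}$ in \eqref{equ:BER1} is both exhaustive and non-overlapping. Once this combinatorial skeleton is in place, the $\mathcal Q$-function evaluations and the averaging over the rows of $\mathbf A$ are entirely routine.
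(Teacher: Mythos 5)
Your proposal follows essentially the same route as the paper's Appendix~A: per-stage ML threshold decisions yielding Gaussian-tail ($\mathcal{Q}$-function) probabilities with the residual term $\sum_{j} e_j P_j$, averaging over the $2^{N-k}$ equiprobable uncancelled-symbol patterns encoded by the rows of $\mathbf{A}$, and a chain-rule assembly over the error patterns in $\{-1,0,1\}^{k-1}$ --- if anything, you make the final combination step and the $\tfrac{1}{2}$-split of the error directions more explicit than the paper's terse closing sentence. One minor caution: your aside that the sign asymmetry between the two $\mathcal{Q}$-terms is ``immaterial after the averaging'' is not literally true (the $A_{il}$ are non-negative, so the $-$ and $+$ terms do not coincide after summing over $i$), but since you retain both terms with their correct signs this does not affect the validity of the argument.
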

\begin{proof} The proof is provided in  Appendix A. \end{proof}

\section{NOMA-VLC with Imperfect CSI}
\label{sec:analysis_imperfect}
As already mentioned, NOMA  configurations are rather sensitive to  the  knowledge of all users' channel coefficients. This is of paramount importance not only  for successful data recovery at the receivers, but it is also  crucial at the transmitter site for determining the  power to be allocated to each corresponding user. This is based on  the fact that users must receive signals with different  power levels, depending on the ordering of their channel gains, in order to effectively facilitate SIC.     Thus, while the channel in VLC is technically deterministic for specific transmitter-receiver specifications and fixed locations, the assumption of perfect CSI is not practically realistic even for indoor VLC systems. Typically, CSI  can be firstly determined at the receiver site with the aid of  periodic pilot signals and then, the receivers feed back the quantized channel coefficients to the transmitters through an RF or infrared (IR) uplink\footnote{Although VLC uplink can be theoretically possible, it is energy-inefficient for low-power mobile devices. Thus, utilizing uplink-downlink reciprocity for acquiring  CSI at the transmitter is not practically relevant in the context of VLC systems. }.  To this effect, the  uncertainty in the VLC channel estimation arises from the noise in the downlink and uplink channels as well as from the mobility of users in indoor environments. Moreover,  AD/DA conversion of the channel estimates introduces quantization errors that add to the channel uncertainty, which is beyond the scope of our work \cite{quantization}.
Therefore, it becomes evident that it is essential to quantify the effects of imperfect CSI on the performance of NOMA VLC systems. To this end,  we consider two different  realistic stochastic uncertainty models for the CSI, namely noisy CSI and outdated CSI.
\subsection{Noisy CSI}
By assuming that  $\hat{h}_k$ is the estimate  for the channel between the \emph{k}-th user and the transmitting LED, it follows that
\begin{equation}\label{equ:noisyCSI}
\hat{h}_k= h_k + \epsilon_{n}
\end{equation}
where $\epsilon_n$ denotes the  channel estimation error modeled as  a  zero-mean Gaussian distribution  with variance $\sigma_{\epsilon_{n}}^2$, i.e, $\epsilon_n \sim \mathcal{N}(0,\sigma_{\epsilon_{n}}^2)$,  which has been
adopted  as a reasonable model for indoor VLC systems \cite{Joint, coordinated}. To this effect, it immediately follows that  the channel estimate  $\hat{h}_k$ can be modelled as $\hat{h}_k \sim \mathcal{N}(h_k,\sigma_{\epsilon_{n}}^2)$.  Under the realistic case of  noisy CSI, the conditional error probability for user $U_k$ can be approximated by the following Proposition.

\begin{proposition}
Under noisy CSI, the error in decoding the \emph{k}-th signal at $U_k$ conditioned on the previous detections is given by
\begin{equation}\label{equ:BER_noisy2}
\begin{aligned}
\mbox{Pr}_{{e_k}}|{e_1}, \ldots ,{e_{k-1}}   & \approx \frac{1}{2^{N-k+1} \sqrt{ \sigma_{\epsilon_{n}}^2}}  \sum_{i=1}^{2^{N-k}}  {\frac{1}{\sqrt{ \alpha_i}}} e^{c-\frac{\mathcal{P} \left(2 a {h_k}^2 \mathcal{P}+b \left(b \mathcal{P} {\sigma_{\epsilon_{n}}}^2+2 h_k {\sigma_n}\right)\right)}{4 a \mathcal{P}^2{\sigma_{\epsilon_{n}}}^2-2{\sigma_n}^2}}\\
&+ \frac{1}{2^{N-k+1} \sqrt{ \sigma_{\epsilon_{n}}^2}}  \sum_{i=1}^{2^{N-k}}  {\frac{1}{\sqrt{ \alpha_i}}} e^{c-\frac{\tilde{\mathcal{P}} \left(2 a {h_k}^2 \tilde{\mathcal{P}}+b
   \left(b \tilde{\mathcal{P}} {\sigma_{\epsilon_{n}}}^2+2 h_k {\sigma_n}\right)\right)}{4 a \tilde{\mathcal{P}}^2{\sigma_{\epsilon_{n}}}^2-2{\sigma_n}^2}}
\end{aligned}
\end{equation}
where
\begin{equation}
\alpha =\frac{1}{{\sigma_{\epsilon_{n}}}^2}-\frac{2 a\mathcal{P}^2}{{\sigma_n}^2}
\end{equation}
and
\begin{equation}
\mathcal{P}=\frac{P_k}{2} - \sum_{j=1}^{k-1} {e_j} P_j - \sum_{l=k+1}^{N} P_l A_{il}
\end{equation}
while, similarly
\begin{equation}
\tilde{\alpha} =\frac{1}{{\sigma_{\epsilon_{n}}}^2}-\frac{2 a\tilde{\mathcal{P}}^2}{{\sigma_n}^2}
\end{equation}
and
\begin{equation}
\tilde{\mathcal{P}}= \frac{P_k}{2} + \sum_{j=1}^{k-1} {e_j} P_j + \sum_{l=k+1}^{N} P_l A_{il}.
\end{equation}
\end{proposition}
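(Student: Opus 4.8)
The plan is to start from the perfect-CSI conditional error probability in \eqref{equ:BER2} and replace each deterministic $Q$-function term by its expectation over the random channel estimate $\hat{h}_k \sim \mathcal{N}(h_k,\sigma_{\epsilon_{n}}^2)$. Concretely, when the receiver performs detection using the noisy estimate $\hat{h}_k$ in place of the true $h_k$, each term in \eqref{equ:BER2} becomes $\mathcal{Q}\!\left(\frac{\gamma \hat{h}_k}{\sigma_n}\,\mathcal{P}\right)$ (and the companion term with $\tilde{\mathcal{P}}$), where $\mathcal{P}$ and $\tilde{\mathcal{P}}$ are the power-gap quantities already defined in the statement. Averaging over the estimation error gives the integral
\begin{equation*}
\mathbb{E}_{\hat{h}_k}\!\left[\mathcal{Q}\!\left(\tfrac{\gamma \hat{h}_k}{\sigma_n}\mathcal{P}\right)\right]
= \frac{1}{\sqrt{2\pi \sigma_{\epsilon_{n}}^2}} \int_{-\infty}^{\infty} \mathcal{Q}\!\left(\tfrac{\gamma \hat{h}_k}{\sigma_n}\mathcal{P}\right) e^{-\frac{(\hat{h}_k - h_k)^2}{2\sigma_{\epsilon_{n}}^2}}\, d\hat{h}_k,
\end{equation*}
so the whole task reduces to evaluating a Gaussian-weighted integral of a $Q$-function and then summing over the $2^{N-k}$ interference combinations indexed by $i$, exactly as in the perfect-CSI sum.

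The core analytic step is to handle the Gaussian average of the $Q$-function. The standard exact identity $\int \mathcal{Q}(c_1 x + c_2)\,\mathcal{N}(x;\mu,\nu)\,dx$ admits a closed form, but it reintroduces a $Q$-function of the averaged argument and would not produce the clean exponential in \eqref{equ:BER_noisy2}. Since the proposition is stated as an approximation ($\approx$), I would instead invoke a tractable exponential surrogate for the $Q$-function — most naturally $\mathcal{Q}(x) \approx \tfrac{1}{2}e^{-x^2/2}$ or a single-term Chernoff-type bound $\mathcal{Q}(x)\le \tfrac{1}{2}e^{-x^2/2}$ — which turns the integrand into a product of two Gaussian-type exponentials in $\hat{h}_k$. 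Completing the square in $\hat{h}_k$ then yields a Gaussian integral that evaluates to $\sqrt{\pi/\alpha_i}$ (up to constants), where the coefficient $\alpha = \frac{1}{\sigma_{\epsilon_{n}}^2} - \frac{2a\mathcal{P}^2}{\sigma_n^2}$ is precisely the quadratic coefficient assembled from the estimation-error variance and the $Q$-function surrogate's curvature (with $a$ absorbing the $\gamma^2/2$ factor). The residual exponent after completing the square produces the ratio $\frac{\mathcal{P}(2a h_k^2 \mathcal{P} + b(b\mathcal{P}\sigma_{\epsilon_{n}}^2 + 2h_k\sigma_n))}{4a\mathcal{P}^2\sigma_{\epsilon_{n}}^2 - 2\sigma_n^2}$ appearing in the statement, so the constants $a$, $b$, $c$ are bookkeeping parameters for the chosen exponential approximation and the square-completion.

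After obtaining the closed form for a single term, the remaining steps are routine: I would carry the same computation through for the $s_k=1$ branch (which swaps $\mathcal{P}\mapsto\tilde{\mathcal{P}}$ and $\alpha\mapsto\tilde{\alpha}$), reinsert the $\frac{1}{2^{N-k+1}}$ normalization from \eqref{equ:BER2}, and sum both branches over $i=1,\dots,2^{N-k}$ using the interference matrix $\textbf{A}$ from \eqref{equ:A} to enumerate the combinations. Collecting terms reproduces \eqref{equ:BER_noisy2} verbatim.

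The main obstacle I anticipate is controlling the approximation cleanly: the Gaussian average of an exact $Q$-function does not collapse to a pure exponential, so the entire result hinges on the choice of exponential surrogate for $\mathcal{Q}(\cdot)$ and on keeping the square-completion valid. In particular, the completed-square Gaussian integral converges only when $\alpha_i > 0$, i.e. when $\frac{1}{\sigma_{\epsilon_{n}}^2} > \frac{2a\mathcal{P}^2}{\sigma_n^2}$; the denominators $4a\mathcal{P}^2\sigma_{\epsilon_{n}}^2 - 2\sigma_n^2$ in \eqref{equ:BER_noisy2} sit right at this threshold, so I would need to verify that the operating regime (small estimation-error variance $\sigma_{\epsilon_{n}}^2$ relative to $\sigma_n^2$) keeps the integral well-defined and the approximation tight. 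Matching the specific constants $a$, $b$, $c$ to the surrogate used — and confirming that the stated expression is dimensionally consistent with the $\gamma$, $P_i$, and noise terms inherited from \eqref{equ:BER2} — is where the bookkeeping is most delicate.
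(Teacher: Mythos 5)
Your strategy coincides with the paper's own proof in Appendix B: express the conditional BEP under noisy CSI as a Gaussian average of $Q$-functions over the estimation error, note that an exact evaluation does not yield the desired form, replace $\mathcal{Q}(\cdot)$ by the exponential surrogate $\mathcal{Q}(x)\approx e^{a x^2+b x+c}$ of \cite{Q-function}, complete the square in the error variable, evaluate the resulting Gaussian integral to produce the $1/\sqrt{\alpha_i}$ prefactors and the stated exponents, and finally reinsert the $2^{-(N-k+1)}$ normalization and sum both branches over the $2^{N-k}$ interference combinations. Your identification of $a$, $b$, $c$ as the fitted parameters of the surrogate is exactly the paper's step. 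Two small corrections: your ``most natural'' first candidates, $\mathcal{Q}(x)\approx\frac{1}{2}e^{-x^2/2}$ or a Chernoff-type bound, have $b=0$ and therefore cannot reproduce the $b$-linear terms that survive in the exponent of \eqref{equ:BER_noisy2} --- only the general three-parameter fit works; and since $a<0$ for that fit, the convergence condition $\alpha_i>0$ you worried about holds automatically, with no restriction on $\sigma_{\epsilon_n}^2$. Your side remark that the exact Gaussian average of a $Q$-function with linear argument does admit a closed form (as another $Q$-function) is accurate and is in fact sharper than the paper's claim that the integrals are unavailable.

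There is one substantive difference in setup worth flagging. You inject the error by wholesale substitution $h_k\to\hat{h}_k$ in \eqref{equ:BER2}, so that $\epsilon_n$ multiplies the entire gap $\mathcal{P}$. The paper instead derives the error event from the ML rule with the mismatched threshold $\frac{1}{2}\gamma P_m \hat{h}_k$ while the received mean still passes through the true $h_k$; this yields in its intermediate step \eqref{equ:BER_noisy} a perturbation $\pm\frac{\gamma P_k \epsilon_n}{2\sigma_n}$ acting on the threshold term alone, which differs from your version whenever residual or un-cancelled interference makes $\mathcal{P}\neq P_k/2$. Curiously, carrying the square-completion through the paper's own \eqref{equ:BER_noisy} would give a quadratic coefficient of the form $\frac{1}{\sigma_{\epsilon_n}^2}-\frac{a\gamma^2 P_k^2}{2\sigma_n^2}$, whereas the stated $\alpha=\frac{1}{\sigma_{\epsilon_n}^2}-\frac{2a\mathcal{P}^2}{\sigma_n^2}$ and the exponent of \eqref{equ:BER_noisy2} (as direct computation confirms) are exactly what your substitution produces, with $\gamma$ absorbed into unity --- consistent with your suspicion about the missing $\gamma$. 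So your route reproduces the Proposition verbatim, at the cost of bypassing the threshold-level ML derivation that physically motivates where $\epsilon_n$ enters; the paper's route is physically grounded but its final formula aligns with your placement of the error rather than its own intermediate equation.
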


\begin{proof} The proof is provided in Appendix B.
\end{proof}

%

\subsection{Outdated CSI}
Outdated CSI error may result  from the variations in channel realizations   due to the mobility of users and/or shadowing effects that occur after the latest channel estimate update. In this context, we consider  a deterministically bounded random variable $\epsilon_o$ to model the outdated CSI error as
\begin{equation}\label{equ:outdatedCSI}
\hat{h}_k= h_k + \epsilon_{o}
\end{equation}
where $\epsilon_{o} \leq \mathcal{E}$, with $\mathcal{E}$ denoting the error bound that occurs when the mobile user moves with maximum velocity between the  reception of pilot signals and data \cite{coordinated}. In what follows, we  derive  a tight  upper bound for the conditional error probability at user $U_k$.

\begin{proposition}
The conditional error probability at user $U_k$ for the case of outdated CSI can be upper bounded as follows:
\begin{equation}\label{equ:BER3}
\begin{aligned}
 \mbox{Pr}_{e_k}|{e_1}, \ldots ,{e_{k-1}}   \leq  & \frac{1}{2^{N-k+1}} \sum_{i=1}^{2^{N-k}} \mathcal{Q}\left(-\frac{P_k}{2 {\sigma_n} }  \mathcal{E}+\frac{ \gamma h_k}{\sigma_n} \left(\frac{P_k}{2} - \sum_{j=1}^{k-1} {e_j} P_j - \sum_{l=k+1}^{N} P_l A_{il}\right) \right)
 \\ &  +  \frac{1}{2^{N-k+1}} \sum_{i=1}^{2^{N-k}} \mathcal{Q}\left(-\frac{P_k}{2{\sigma_n} }  \mathcal{E} +\frac{ \gamma h_k}{\sigma_n} \left(\frac{P_k}{2} + \sum_{j=1}^{k-1} {e_j} P_j + \sum_{l=k+1}^{N} P_l A_{il}\right) \right).
 \end{aligned}
\end{equation}
\end{proposition}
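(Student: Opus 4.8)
The plan is to start from the perfect-CSI conditional error probability in (\ref{equ:BER2}) and track how the single modification introduced by outdated CSI---namely that the detector must form its decision threshold from the outdated estimate $\hat h_k = h_k + \epsilon_o$ of (\ref{equ:outdatedCSI}) rather than from the true gain $h_k$---propagates into each Gaussian $Q$-function argument. Since $\epsilon_o$ is here a \emph{deterministic} (bounded) quantity, no averaging over its distribution is required, which makes the derivation considerably simpler than the noisy-CSI case of Proposition~1; the only tools needed are the exact per-hypothesis error expression and the monotonicity of $Q(\cdot)$.

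First I would fix one of the $2^{N-k}$ interference patterns (one row $i$ of the matrix $\mathbf{A}$ in (\ref{equ:A})) and one of the two equiprobable OOK hypotheses for $s_k$, and write the corresponding pairwise error as a single $Q$-function exactly as in the perfect-CSI analysis. The decision threshold for $s_k$ is the midpoint of the two nominal on/off levels, which the receiver computes as $\gamma\hat h_k P_k/2$ using the estimate, whereas the received statistic still propagates through the true gain $\gamma h_k$. Substituting $\hat h_k = h_k + \epsilon_o$ into the threshold therefore shifts the effective half-separation by $\gamma\epsilon_o P_k/2$: the argument of the $Q$-function acquires an additive correction $-\tfrac{\gamma\epsilon_o}{2\sigma_n}P_k$ for the hypothesis $s_k=1$ and $+\tfrac{\gamma\epsilon_o}{2\sigma_n}P_k$ for the hypothesis $s_k=0$, while the residual-interference terms $\sum_{j=1}^{k-1}e_jP_j$ and $\sum_{l=k+1}^{N}P_lA_{il}$ are inherited unchanged from (\ref{equ:BER2}).

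Next I would invoke the bound $|\epsilon_o|\le\mathcal{E}$ together with the fact that $Q(\cdot)$ is strictly decreasing. For the $s_k=1$ term the argument is decreasing in $\epsilon_o$, so its worst case is at $\epsilon_o=+\mathcal{E}$; for the $s_k=0$ term the argument is increasing in $\epsilon_o$, so its worst case is at $\epsilon_o=-\mathcal{E}$. In both cases the extremal value of the correction equals $-\tfrac{\gamma\mathcal{E}}{2\sigma_n}P_k$, so each of the two $Q$-terms is individually upper bounded by the same expression with $\epsilon_o$ replaced by the single worst-case shift $-\tfrac{P_k}{2\sigma_n}\mathcal{E}$ (with $\gamma$ absorbed into the error bound). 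Averaging the two bounded hypotheses with weight $\tfrac12$ and summing over the $2^{N-k}$ interference rows with the prefactor $2^{-(N-k)}$ then reproduces the claimed right-hand side of (\ref{equ:BER3}).

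The main obstacle is careful bookkeeping rather than any hard estimate: one must verify that the threshold-mismatch correction carries \emph{opposite} signs in the two hypotheses, and then check that the worst-case choice of $\epsilon_o\in[-\mathcal{E},\mathcal{E}]$ nonetheless produces the \emph{same} negative shift $-\tfrac{\gamma\mathcal{E}}{2\sigma_n}P_k$ in both arguments; this is precisely what legitimizes the single common correction term appearing in (\ref{equ:BER3}). The bound is tight in the sense that it is attained at the extremal channel perturbation and collapses back to (\ref{equ:BER2}) as $\mathcal{E}\to 0$, and it is conservative only insofar as the two worst cases are realized at opposite ends of the admissible interval and hence cannot occur simultaneously.
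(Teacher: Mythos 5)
Your proof is correct and follows essentially the same route as the paper's Appendix C: write the per-hypothesis error exactly as a $\mathcal{Q}$-function with the decision threshold $\gamma\hat{h}_k P_k/2$ formed from the outdated estimate, substitute $\hat{h}_k = h_k + \epsilon_o$, and use the monotonicity of $\mathcal{Q}(\cdot)$ with $|\epsilon_o|\leq\mathcal{E}$ before averaging over hypotheses and interference patterns. If anything, your sign bookkeeping is more careful than the paper's: the appendix substitutes $\epsilon_o=\mathcal{E}$ in both hypotheses and thereby writes a $+\tfrac{\gamma P_m}{2\sigma_n}\mathcal{E}$ shift in the $s_k=0$ term (which would loosen, not bound, that term), whereas your observation that the worst case is $\epsilon_o=-\mathcal{E}$ for $s_k=0$ and $\epsilon_o=+\mathcal{E}$ for $s_k=1$, yielding the common negative shift $-\tfrac{\gamma P_k}{2\sigma_n}\mathcal{E}$, is exactly what justifies the bound as stated in the proposition.
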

\begin{proof} The proof is provided in Appendix C.
\end{proof}
\begin{figure}[h]
\center
\includegraphics[width=3.5in,height=2.5in]{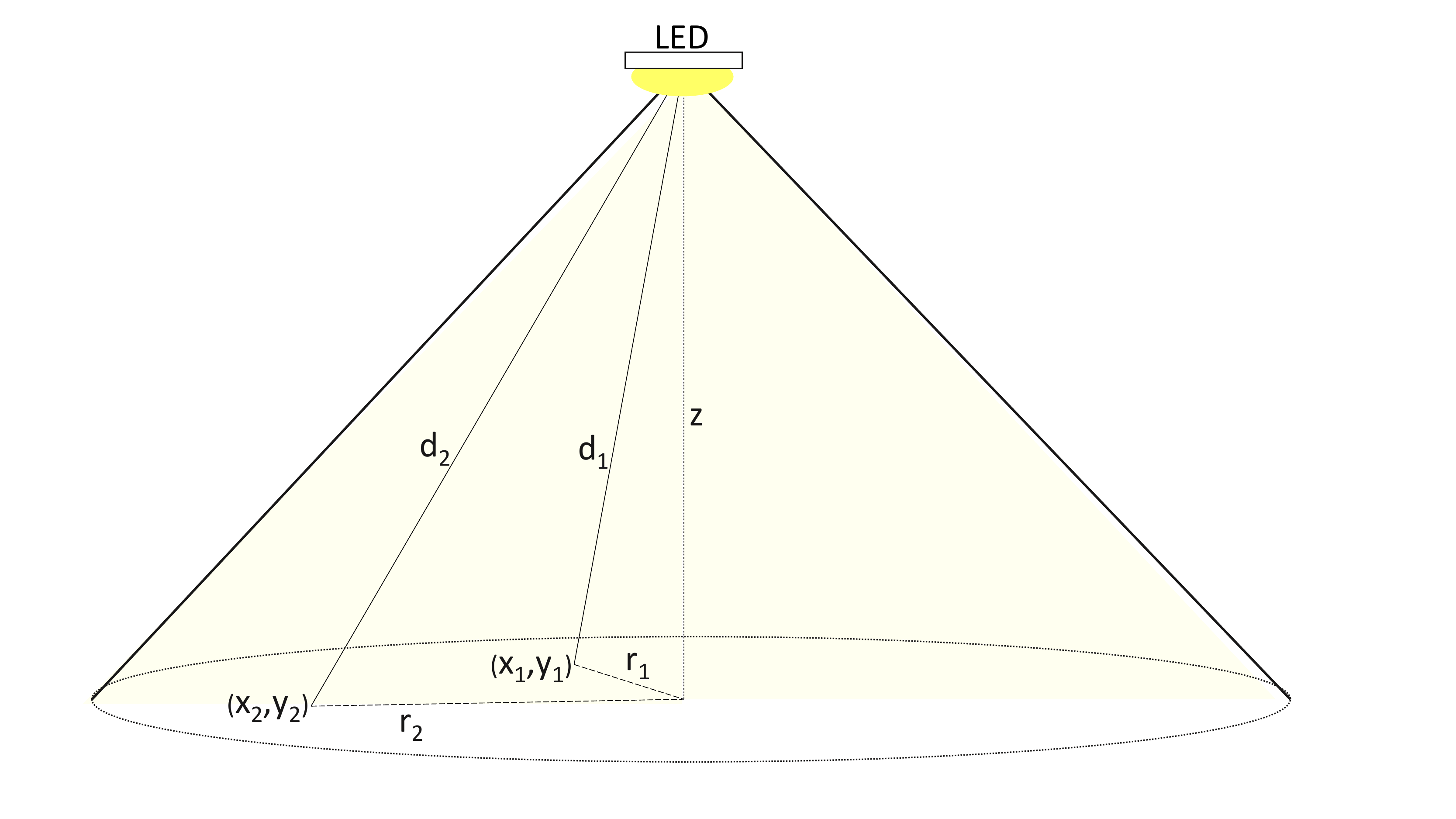}
\caption{Outdated CSI resulting from user mobility.}
\label{fig:mobility}
\end{figure}
\subsubsection{Determination of the value of $\mathcal{E}$}
In order to obtain the upper bound on the CSI error, we simplify the channel gain  by substituting \eqref{equ:g} and \eqref{equ:R_o} into \eqref{equ:hi} and substituting  $\cos\varphi_{i}$ with $z/d_{i}$, where $z$ denotes the  height between the LEDs and the PDs, which is assumed to be fixed, i.e., at a level of an ordinary table. Based on this and by  assuming  vertical alignment of LEDs and PDs, the corresponding  channel gain  $h_{i}$ can be expressed as
\begin{equation}\label{equ:channel_simple2}
h_{i} = \varpi \frac{1}{d^{m+3}_{i}}
\end{equation}
where
\begin{equation}
\varpi=\frac{(m+1) A_i  T_s(\phi_{i})g(\phi_{i})}{2\pi}.
\end{equation}

By now referring  to Fig. \ref{fig:mobility}, let user $U_k$ move in the horizontal plane from $(x_1,y_1)$ to $(x_2,y_2)$ with maximum velocity $v$. Then, the error bound $\mathcal{E}$ can be calculated as follows
\begin{equation}\label{equ:outdatedCSI3}
\mathcal{E} = \varpi |{d^{m+3}_{2}} - {d^{m+3}_{1}}|
\end{equation}
where $d^2_1 = r^2_1 + z^2$, $d^2_2 = r^2_2 + z^2$, and $v = \sqrt{r^2_1 - r^2_2}/t$, with $t$ denoting the time elapsed since the last CSI update. It is evident  that the algebraic representation of (\ref{equ:outdatedCSI3}) is tractable and can be computed straightforwardly. Furthermore, it is particularly accurate, as shown in detail in Section \ref{sec:results}.

\section{NOMA-VLC with Dimming Control}
In this Section, we investigate  the performance of NOMA-VLC under dimming control. To this end, it is firstly recalled that adjusting  the brightness level of the LED can be achieved by two approaches: 1) analog dimming, where the driving current of the LED is directly adjusted to the required illumination level;  2) digital dimming, in which the driving current is maintained constant while the duty cycle is varied in order to acquire the desired brightness \cite{dimming1, dimming2}. Therefore, analog dimming is straightforward  to implement given  that the LED brightness is directly proportional to the forward current. However, this technique may cause chromaticity shift problems as it alters the transmitted wavelength. To implement  analog dimming, we use  unipolar OOK signal to drive the LED and dimming is achieved by altering the driving current to match the dimming target. In this context, the  driving current, and, consequently, the transmitted optical power, are set to be  proportional to the dimming factor $\gamma_d$. In this case, the  error in decoding the \emph{k}-th signal conditioned on the previous detections can be obtained by (\ref{equ:BER2}), (\ref{equ:BER_noisy2}) and (\ref{equ:BER3}) for  perfect, noisy and outdated CSI, respectively  by changing the transmit power from $P$ to $\gamma_d P$.

On the contrary, digital dimming imposes a  pulse width modulation (PWM) signal with a  duty cycle that  is determined by the required dimming factor. This technique alleviates the chromaticity shifts, but at a cost  of reduced spectral efficiency. In the present analysis, we implement digital dimming by means of VOOK as in \cite{dimming3}, where the brightness of the LED is controlled by adopting the data duty cycle $\delta_d$ of the OOK signal. Based on this, information bits are transmitted   when the duty cycle is on, while the off portion is filled with dummy bits that are either zeros or ones, depending on the dimming factor $\gamma_d$. VOOK codewords are depicted in Table  \ref{Tab:VOOK}, indicating that  when  $\gamma_d = 0$, the lights are completely turned off while when $\gamma_d = 1$, full brightness is achieved. Accordingly,  no data bits are transmitted when $\gamma_d$ is set to $0$ or $1$. To this effect, with the use of coded VOOK, the BER in decoding the \emph{k}-th signal conditioned on the previous k-1 detections  can be expressed as follows
\begin{equation}\label{equ:BER_VOOK}
\begin{aligned}
\mbox{Pr}_{e_k VOOK}|{e_1}, \ldots ,{e_{k-1}}= \sum_{\lceil i=n/2 \rceil}^{n}\binom{n}{i} {\mbox{Pr}^i_{e_k|{e_1}, \ldots ,{e_{k-1}} }}   (1-\mbox{Pr}_{e_k|{e_1}, \ldots ,{e_{k-1}} })^{n-i}
\end{aligned}
\end{equation}
where $\mbox{Pr}_{e_k|{e_1}, \ldots ,{e_{k-1}}}$ can be  obtained from (\ref{equ:BER2}), (\ref{equ:BER_noisy2}) and (\ref{equ:BER3}) for  perfect, noisy and outdated CSI respectively, and $n$ is the number of redundant bits in the VOOK codewords, calculated as
\begin{equation}\label{equ:n_VOOK}
\begin{aligned}
n=\begin{cases} 20 \gamma_d, \qquad & 0<\gamma_d\leq  \frac{1}{2}  \\
20 - 20 \gamma_d, \qquad &  \frac{1}{2} \leq\gamma_d<1.
\end{cases}
\end{aligned}
\end{equation}
It is noted that the use of  redundant bits in VOOK leads to BER improvements as $n$ increases, yet,  digital dimming affects the  system spectral efficiency as the achievable data rate  deteriorates lineally with the number of bits in the corresponding  codewords.

\begin{table}[ht]
\small
\caption{VOOK Codewors}
\center
\begin{tabular}{c c c}
\hline
\textbf{$\gamma_d$}&\textbf{$\delta_d$}&\textbf{VOOK Codeword} \\ \hline\hline
1.0&$1.0$& 1111111111\\
0.9&$0.2$& dd11111111\\
0.8&$0.4$& dddd111111\\
0.7&$0.6$& dddddd1111\\
0.6&$0.8$& dddddddd11\\
0.5&$1.0$& dddddddddd\\
0.4&$0.8$& dddddddd00\\
0.3&$0.6$& dddddd0000\\
0.2&$0.4$& dddd000000\\
0.1&$0.2$& dd00000000\\
0.0&$0.0$& 0000000000\\
\hline
\end{tabular}
\label{Tab:VOOK}
\end{table}

\section{Results and Discussion}
\label{sec:results}
In this section, we employ the derived analytic expressions in the analysis of the considered setups. Respective results from extensive Monte Carlo simulations are also provided to verify the validity and usefulness of the offered results. Thus, the  BER performance of a NOMA-VLC downlink system is analyzed for different  scenarios based on the system and channel models  in Section \ref{sec:model}.  It should be noted here that the considered channel model is independent of the room geometry, To this end and without loss of generality, we consider that the users exist within  a $4m \times 4m \times 3m = 48m^3$ room environment.

We consider one transmitting LED mounted at the centre of the ceiling and, in order to ensure comparability, fixed LED transmitting power is used in all  scenarios. We also assume the existence of three users in the coverage area of the transmitting LED that are planned to be served simultaneously using NOMA. It is emphasized here  that this number of users is selected for indicative purposes and that the considered system model is generic and applicable to any number of users. In this context, the  LED superimposes the signals of the three users in the power domain by allocating the power values $P_1$, $P_2$ and $P_3$ to $U_1$, $U_2$ and $U_3$ respectively, under the constraint  $P_1 + P_2 + P_3 = P_{\rm LED}$. In the used notation, $U_i$ denotes  the user in \emph{i}-th decoding order; that is, $h_i$ is in the \emph{i}-th ascending order of the channel gains. The channel gains of all users along with other    system  parameters are depicted in  Table \ref{Tab:Parameters}. For the results that involve users' mobility, the locations of the users are randomly generated. 
It should be noted here that the underlying symmetry in VLC systems may lead to similar channel gains and, consequently, resulting to a  higher error rate. This issue was addressed in an earlier work  \cite{NOMA_VLC}, where we proposed a strategy based on tuning the FOVs of the receiving PDs in order to maximize the differences between the channel gains and improve the performance of NOMA.

\begin{figure*}[   \widetilde{}]
\centering
\normalsize
\subfloat[]\centering{\label{ref_label1}\includegraphics[width=3in,height=2.5in]{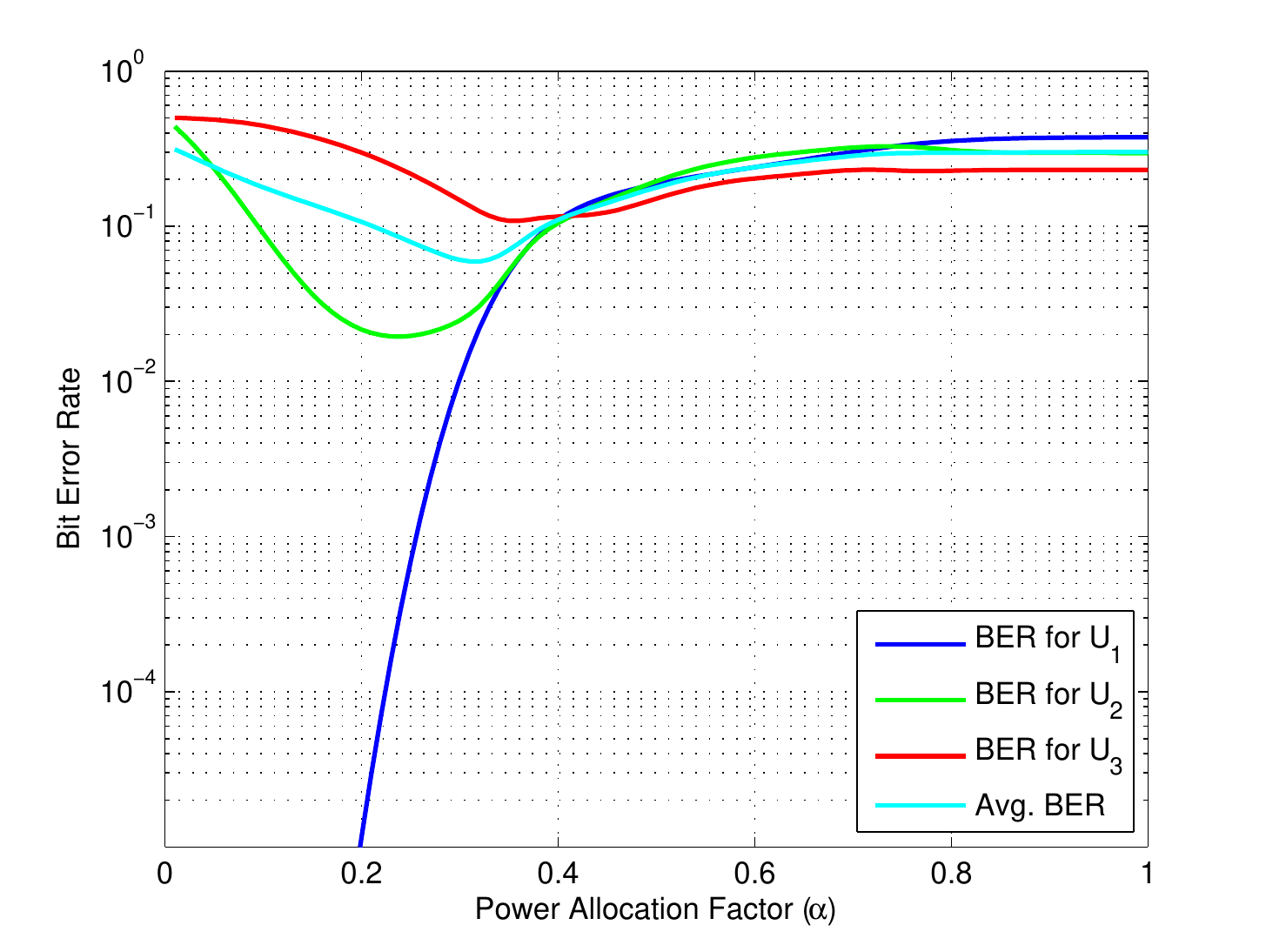}}
\subfloat[]\centering{\label{ref_label2}\includegraphics[width=3in,height=2.5in]{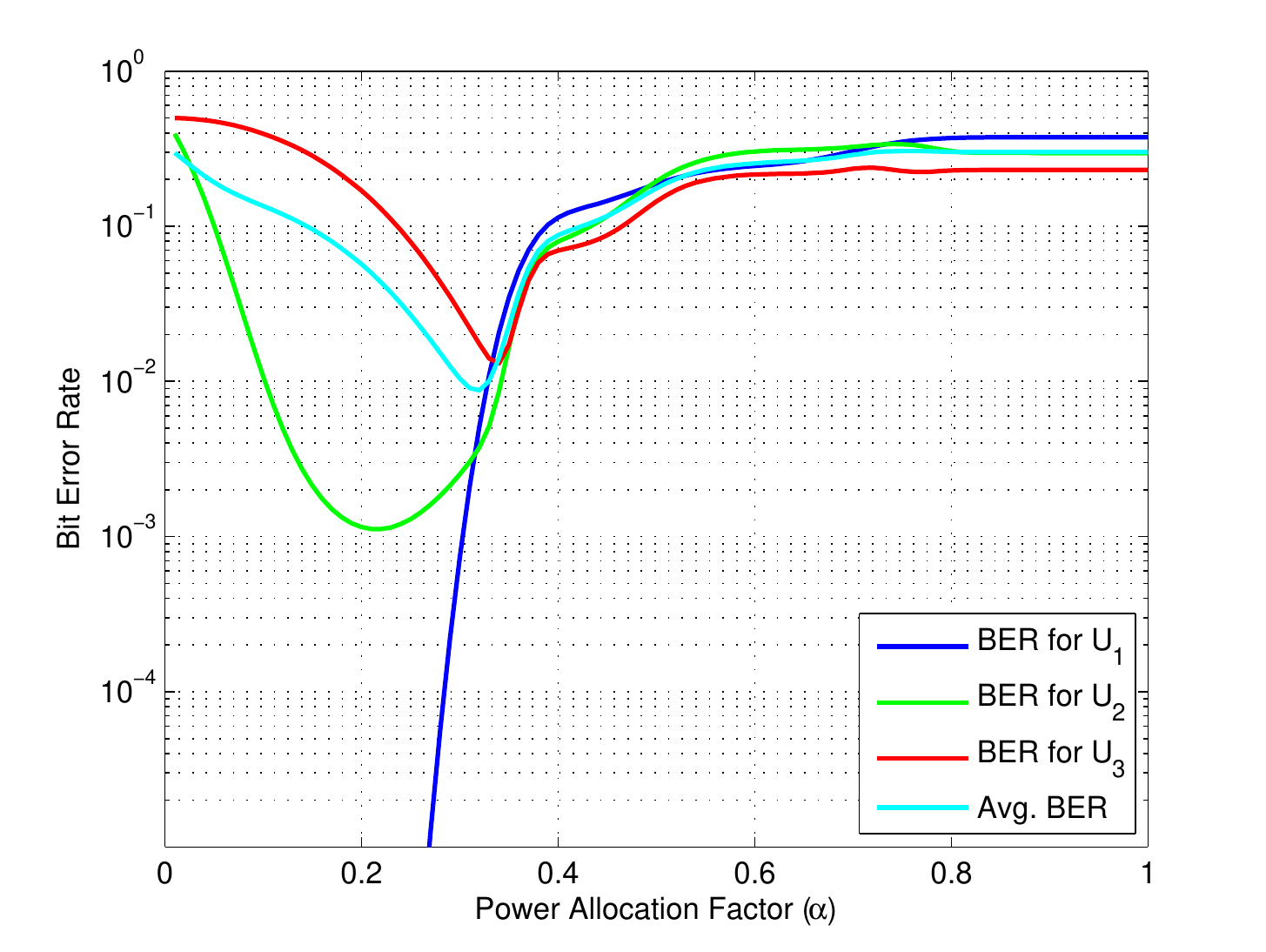}}
\subfloat[]\centering{\label{ref_label2}\includegraphics[width=3in,height=2.5in]{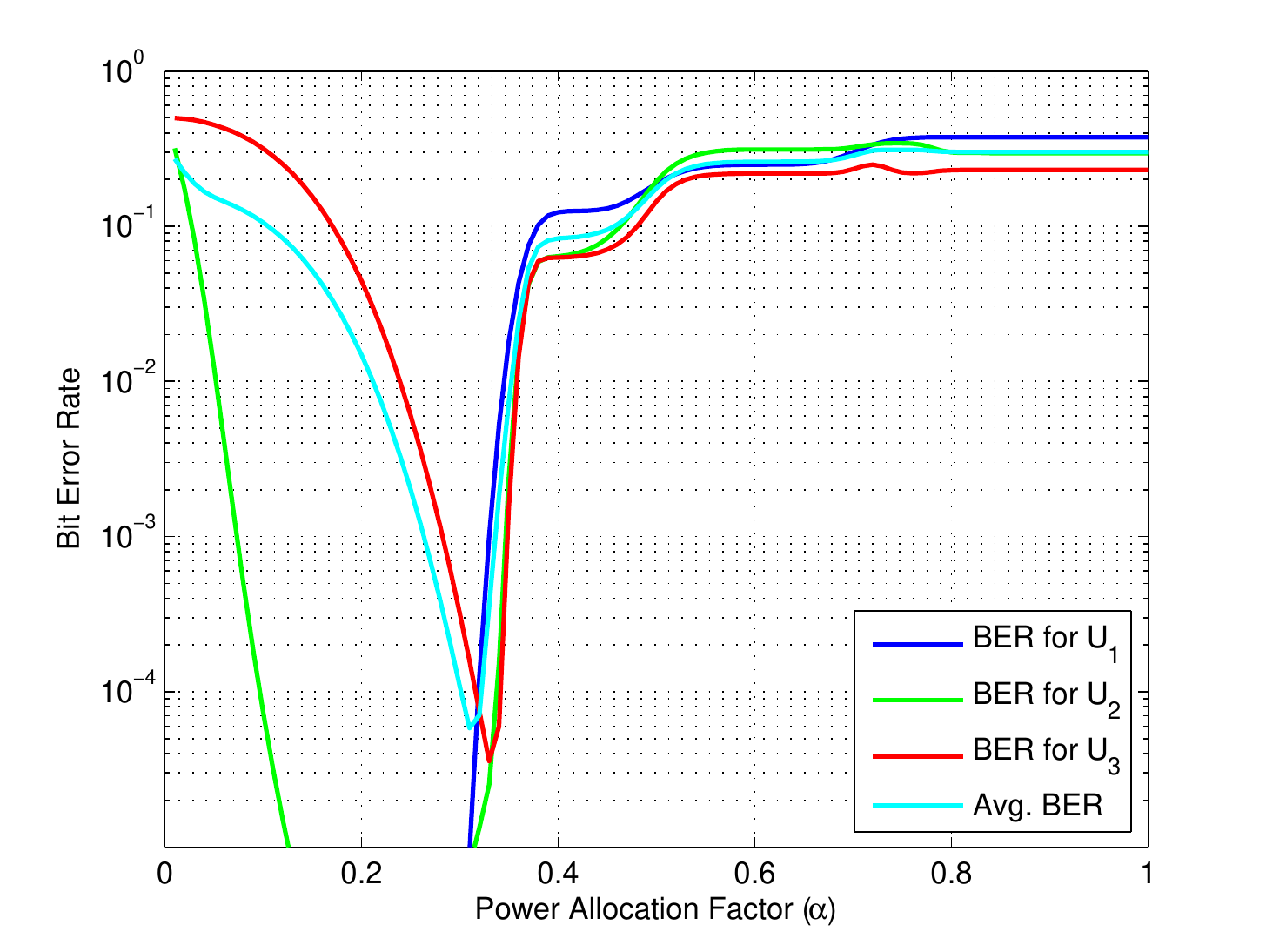}}
\caption{\label{fig:BERvsRho}BER performance under fixed power allocation   (a) SNR=110 dB, (b) SNR=115 dB and (c) SNR=120 dB.}
\end{figure*}

\begin{table}[ht]
\small
\caption{Simulation Parameters}
\center
\begin{tabular}{l l l}
\hline
\textbf{Description}&\textbf{Notation}&\textbf{Value} \\[1 ex] \hline\hline
LED power&$P_{\rm LED}$& 0.25 W\\
Transmitter semi-angle&$\varphi_{i}$& $50$ deg \\
FOV of the PDs&$\phi_{c_i}$& $45$ deg \\
Physical area of PD&$A_i$ &$1.0$ ${\rm cm^2}$ \\
Refractive index of PD lens&$n$&$ 1.5$ \\
Gain of optical filter&$T_s (\phi_{li})$&$ 1.0$ \\
Data rate&$B$&$10$ ${\rm Mbps}$\\
Total number of users&$N$&3 \\
Channel gain of $U_1$ &$h_1$&$0.2835\times10^{-4}$\\
Channel gain of $U_2$ &$h_2$&$0.4787\times10^{-4}$\\
Channel gain of $U_3$ &$h_3$&$0.5272\times10^{-4}$\\
\hline
\end{tabular}
\label{Tab:Parameters}
\end{table}

We evaluate the BER performance with regard to the transmit SNR in order to include the individual path gain of each user. Since the channel gain is in the order of $10^{-4}$, the corresponding results   exhibit an offset of about  $80$ dB  with respect to the  SNR at the receiver site.
First, we investigate the effect of the power allocation factor $\rho$ in (\ref{equ:FPA}) on the BER performance under fixed power allocation. To this end,   Fig. \ref{fig:BERvsRho} shows the  average BER and the individual BER for the three users versus $\rho$ for different transmit SNR values of $110$ dB, $115$ dB and $120$ dB. It is shown that, despite its poor channel conditions, the user in the first decoding order, i.e., $U_1$,   provide comparable error performance to other users. This is because the signal intended for this user is transmitted with a high power compared to other signals, in order to enable $U_1$ to directly decode its signal of interest regardless of the interference it receives. Moreover,
the best average BER performance for all users can be achieved at about $\rho = 0.3$, as in this value, the power levels allocated to users experiencing  low channel gains are sufficiently high to enable correct signal decoding. As $\rho$ increases, users with good channel conditions  receive with higher power levels, which in turn reduces the power associated with the signals of the other users. As a consequence, high errors occur in the early stages of SIC decoding, and are then  inherited  to the following stages, which ultimately leads to poor BER performance. For the rest of the results in this Section, $\rho=0.3$ is considered.

The BER expression derived in Section \ref{sec:analysis}  is validated by Fig. \ref{fig:perfectCSI}, which  shows the BER performance of the three users assuming perfect CSI. It is shown that the derived analytic results are in excellent agreement with the respective Monte Carlo simulation results. It is evident that the user with the lowest decoding order exhibits the best BER performance, while the performance degrades as the decoding order increases. Yet, all users exhibit satisfactory performance above transmit SNR of $120$ dB that corresponds to receive  SNR of about $40$ dB, which is a typical range in VLC transmissions.
\begin{figure}[h]
\center
\includegraphics[width=5in,height=4in]{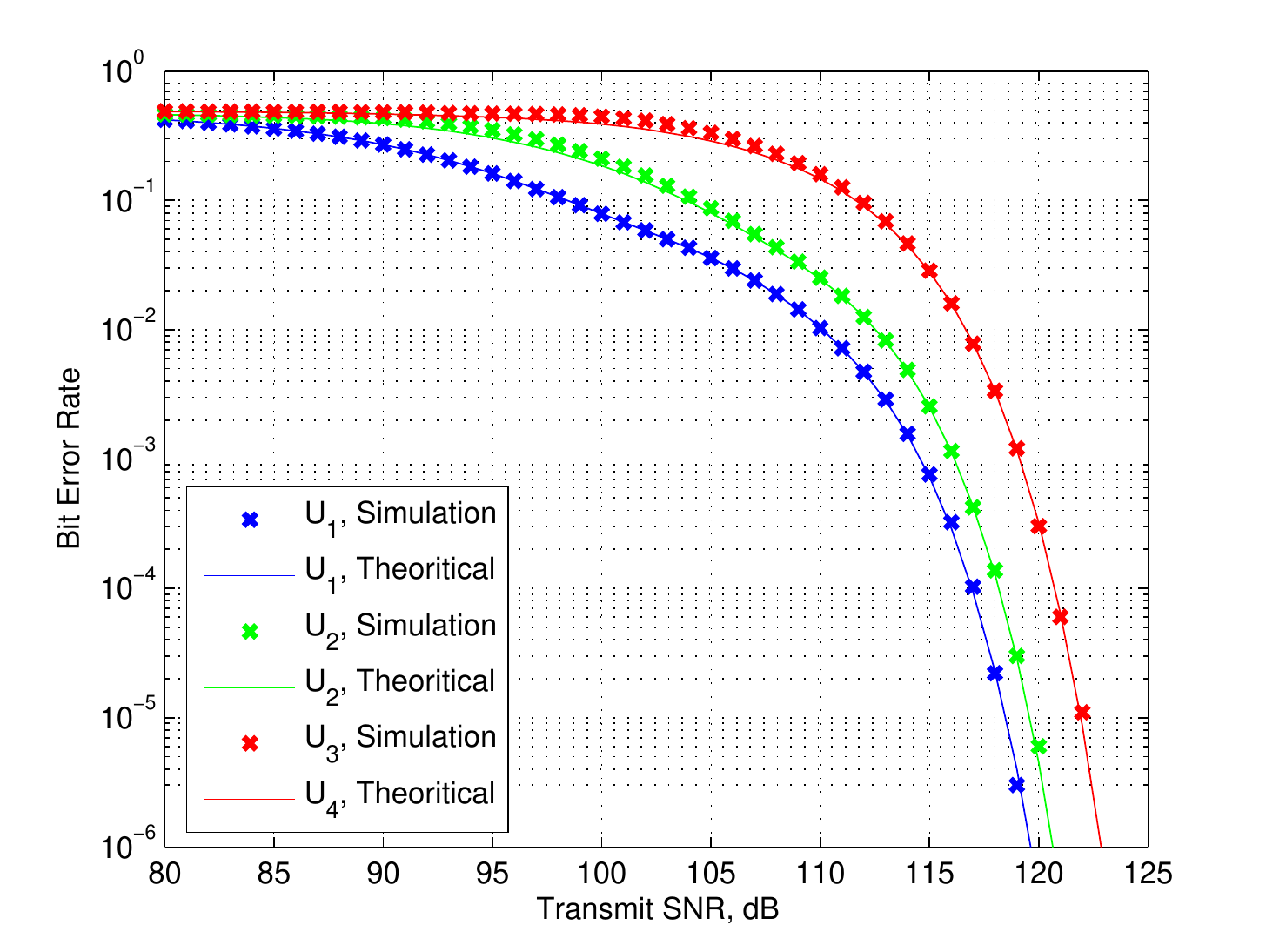}
\caption{BER Performance with Perfect CSI.}
\label{fig:perfectCSI}
\end{figure}
In the following, we investigate the effect of channel uncertainty on the performance of NOMA-VLC. To this end, we assume that the uplink to the LED is error free, so that the LED and the users have the same estimates of the channel gains\footnote{This is a valid assumption for an RF uplink that has been commonly adopted in the literature \cite{CSI1,CSI2}. }. For the case of noisy  CSI, two different models are used: 1) fixed error variance, where   $\sigma_{\epsilon_{n}}^2$ is  independent of the transmit SNR;  2) varying error variance, where $\sigma_{\epsilon_{n}}^2$ is a decreasing function of the transmit SNR. Furthermore, we assume that the noisy channel error variances are identical at the different users. In order to obtain an insight on the impact of different CSI variances on system performance, Fig. \ref{fig:BER_vs_variance} illustrates the BER performance versus different fixed values of $\sigma_{\epsilon_{n}}^2$ at a transmit SNR of $115$ dB. It is clear that users with lower decoding order suffer from higher errors due to the involved  channel uncertainty. This is particularly evident at user $U_1$ that exhibits substantial BER degradation  compared to the error-free CSI (indicated by  dashed line). The reason  is that $U_1$ has the lowest channel gain among all users, which renders signal detection highly sensitive to errors in the available CSI. Moreover, $U_1$ needs to decode $x_1$ with the existence of high interference from the signals of other users, which increases the severity of the effect of imperfect CSI. Furthermore, although other users  also need to decode $x_1$ despite the involved interference, their relatively high channel gains make the detection more robust to channel errors from the early stages of the detection. This is specifically clear for $U_3$ that is the least affected by channel uncertainty.   It should be noted here that the CSI error resulting from the noisy channel is in general  small enough not to affect the ordering of channel gains. Hence, the power allocation based on CSI available at the transmitter is not ultimately  affected. Fig. \ref{fig:noisy_variances} demonstrates  the corresponding  BER performance under fixed and SNR-dependant error variances. It is observed that fixed $\sigma_{\epsilon_{n}}^2$ results to an irreducible error floor at high SNRs for users with low decoding order. However, when $\sigma_{\epsilon_{n}}^2$ is modeled as SNR-dependant, the BER decreases with the increase in transmit SNR. It is not surprising to observe that the performance of user $U_3$ is almost the same for the two variance models,  which is due to the fact that  the impact of error is already insignificant at this user. In order to validate the BER expression for noisy CSI  in \emph{Proposition 1}, we plot the BER performance with the aid of  the derived approximation along with respective results from  Monte Carlo simulations in Fig. \ref{fig:noisyCSI}. It is observed that, the derived  approximation provides accurate results that are in tight agreement with the simulation results. Moreover, it is noted that the user in the first decoding order suffers higher performance degradation compared to users with higher decoding orders. This is due to the fact that $U_1$ does not perform SIC, which means that it has to deal with the existing interference along with the CSI error. Moreover, $U_1$ has the lowest channel gain among all users, which makes the effect of noisy CSI rather detrimental.

\begin{figure}[h]
\center
\includegraphics[width=5in,height=4in]{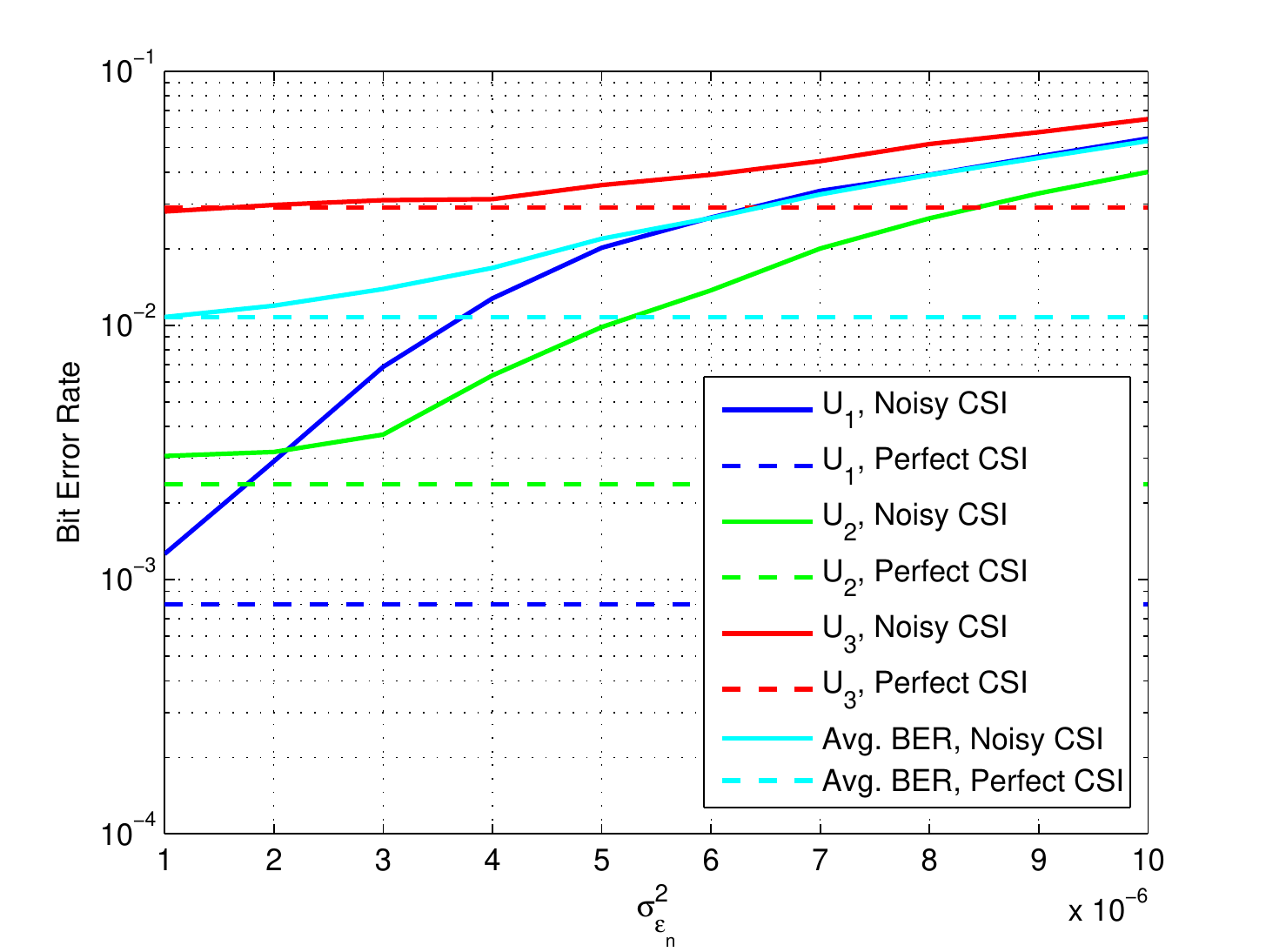}
\caption{BER Performance for different values of $\sigma_{\epsilon_{n}}^2$.}
\label{fig:BER_vs_variance}
\end{figure}

\begin{figure}[h]
\center
\includegraphics[width=5in,height=4in]{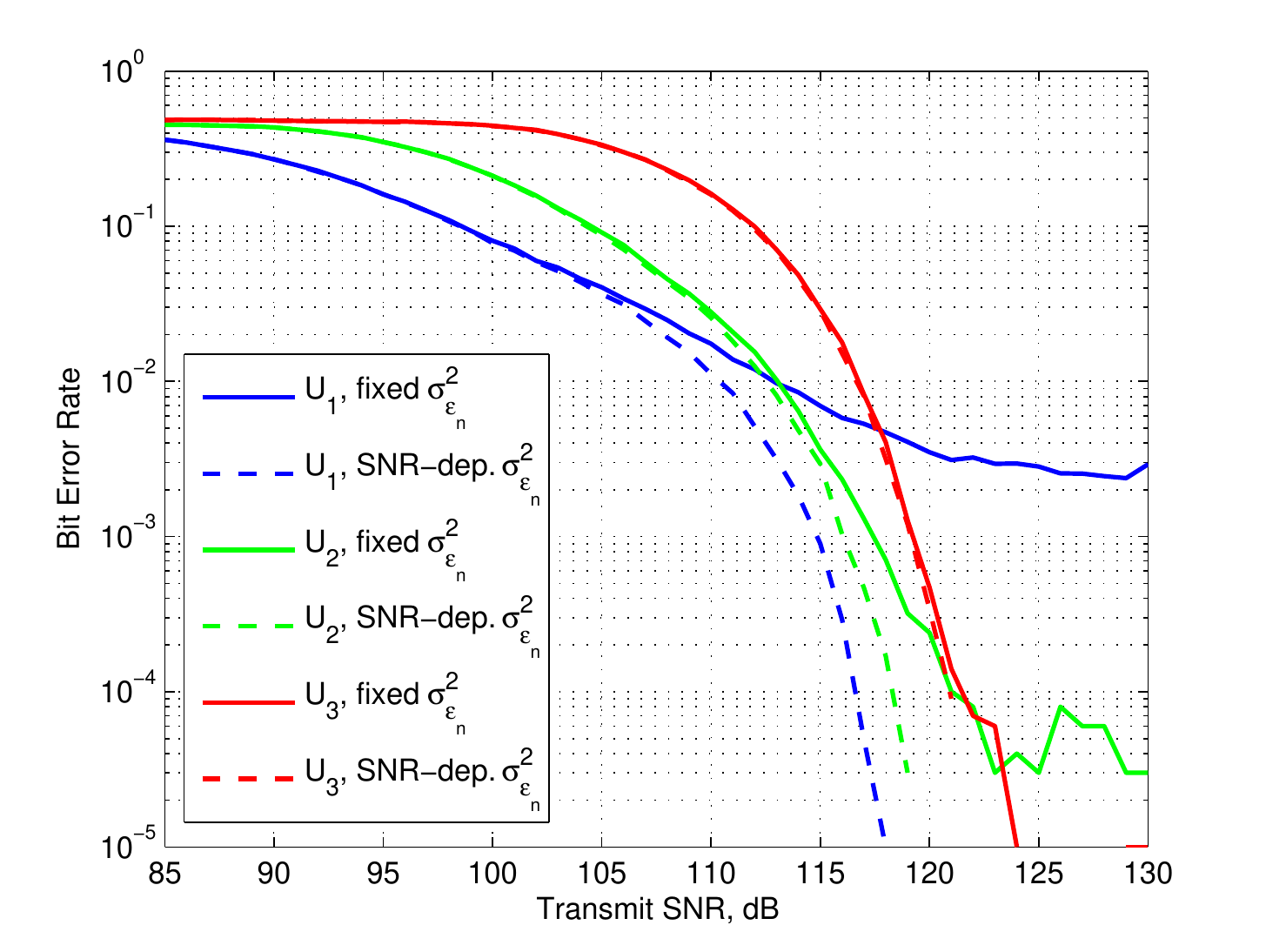}
\caption{BER Performance with Noisy CSI, for fixed and SNR-dependent error variances.}
\label{fig:noisy_variances}
\end{figure}

\begin{figure}[h]
\center
\includegraphics[width=5in,height=4in]{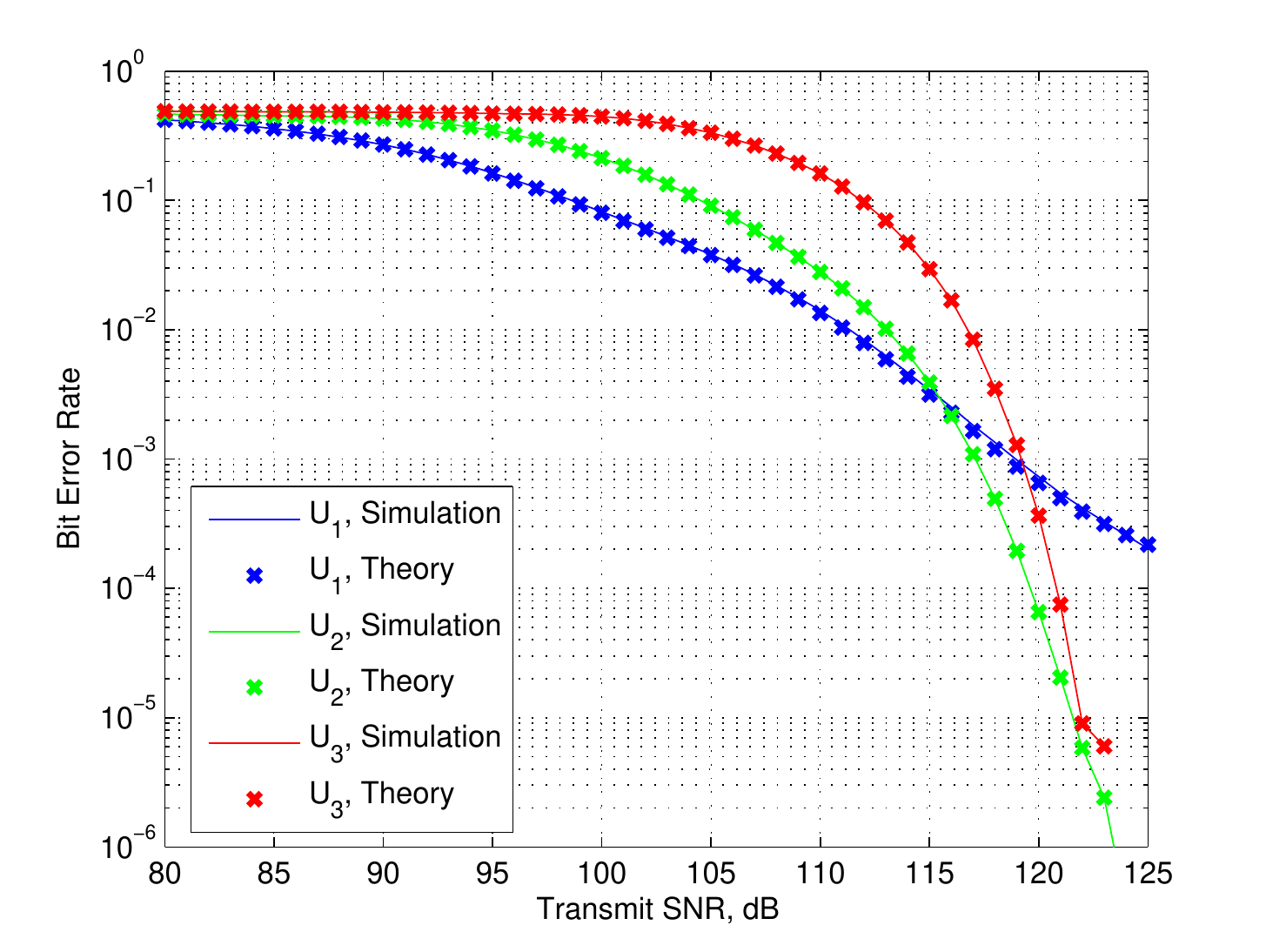}
\caption{BER Performance with Noisy CSI, $\sigma_{\epsilon_{n}}^2 = 2\times10^{-6}$ .}
\label{fig:noisyCSI}
\end{figure}

Next, we quantify  the effect of the mobility of the indoor users. To this end, it is recalled that VLC channels are  mainly dependant upon the user location  with respect to the transmitting LED. As a result, even a slight change in the user's location results in a change of the  corresponding channel gain. To this effect, if a user possesses  an outdated channel estimation, i.e., a change of location occurs before the next channel update, CSI becomes erroneous. In order to quantify  the impact of outdated CSI on the overall system performance, we simulate the mobility of  indoor users  with random speed from $0$ to $2$ m/s while they remain  connected to the same LED. We then assume that the change in location may occur between CSI updates and thus,  both transmitter and receiver  use the outdated CSI for power allocation and decoding, respectively. It is also noted  here that outdated CSI, unlike noisy CSI, may lead to a change in the ordering of the channel gains of the users which ultimately leads  to unfair power allocation at the transmitter, where high power values may be allocated to users  with good channel conditions and vice versa. This results to a dramatic  performance degradation  for users with poor channel gains, as their allocated power becomes insufficient for successful decoding.  In the same context, Fig. \ref{fig:outdated} and Fig. \ref{fig:outdated2} demonstrate the BER performance with outdated CSI for the three users, where the upper bound for the error is determined by (\ref{equ:BER3}), in \emph{Proposition 2}, when the user moves with maximum velocity. In Fig. \ref{fig:outdated}, we simulate the mobility of users assuming that their relative channel ordering remains constant, which  is valid if users change locations in the same trend, i.e. Reference Point Group Model \cite{group_mobility}, which is realistic in large indoor environments, such as museums or airports. On the contrary, Fig.  \ref{fig:outdated2}  illustrates the performance degradation caused by unfair power allocation when outdated CSI leads to change in the ordering of  users' gains. It is noted that for the sake of consistency with other figures, $U_1$ here  denotes  the user that has the lowest channel gain among all users. However, $U_1$ now is not in the first decoding order as its channel is erroneously  estimated not to be the lowest. Therefore, users with low channel gains suffer, as expected,  from dramatic degradation, while $U_3$ benefits from the high power that is in fact erroneously   allocated to it. As a consequence, the  high power allows $U_3$ to detect the desired signal effectively, even though  the estimate of $h$ available at the decoder is practically inaccurate.

\begin{figure}[h]
\center
\includegraphics[width=5in,height=4in]{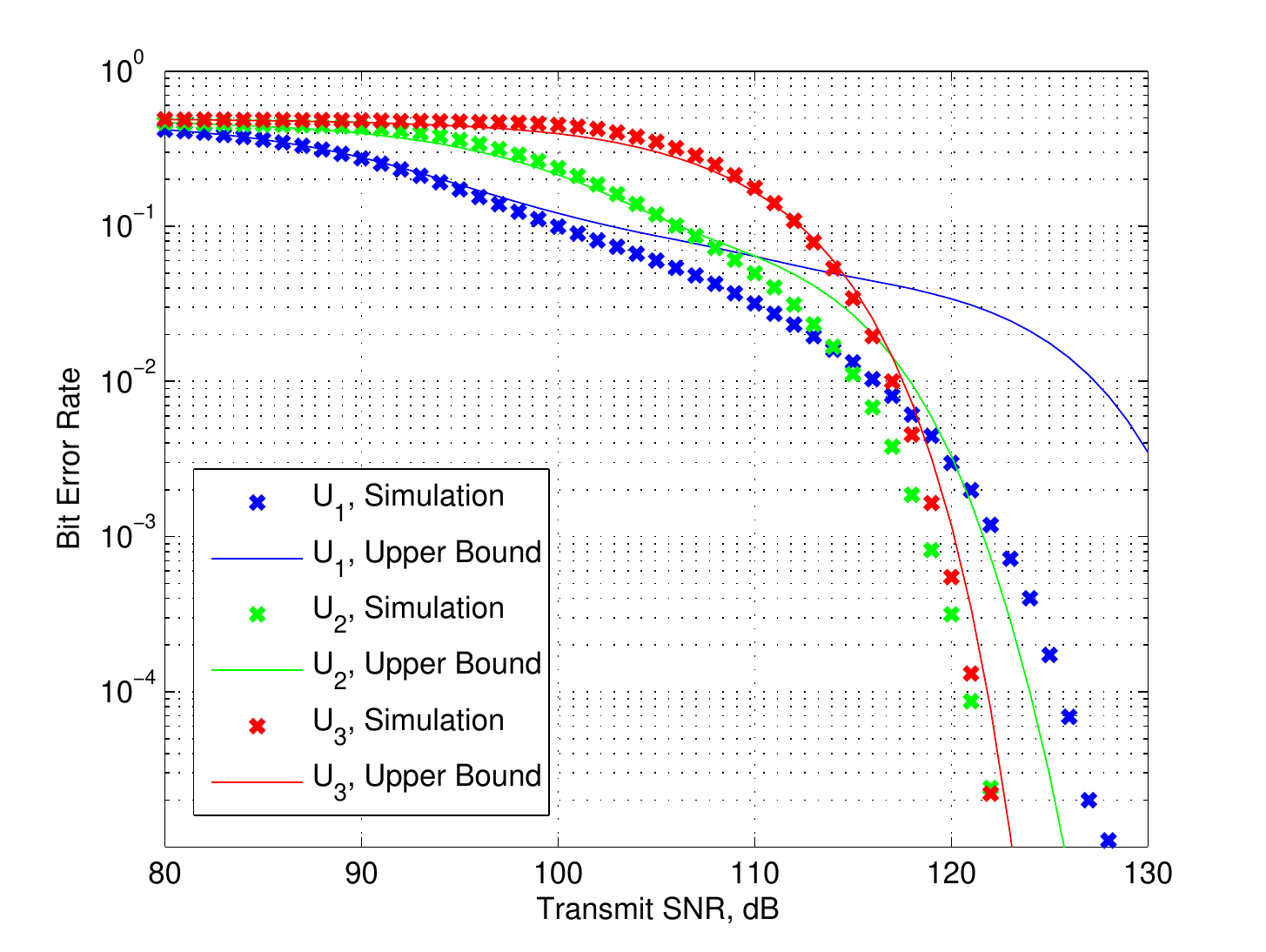}
\caption{BER Performance with Outdated CSI, order not changed.}
\label{fig:outdated}
\end{figure}

\begin{figure}[h]
\center
\includegraphics[width=5in,height=4in]{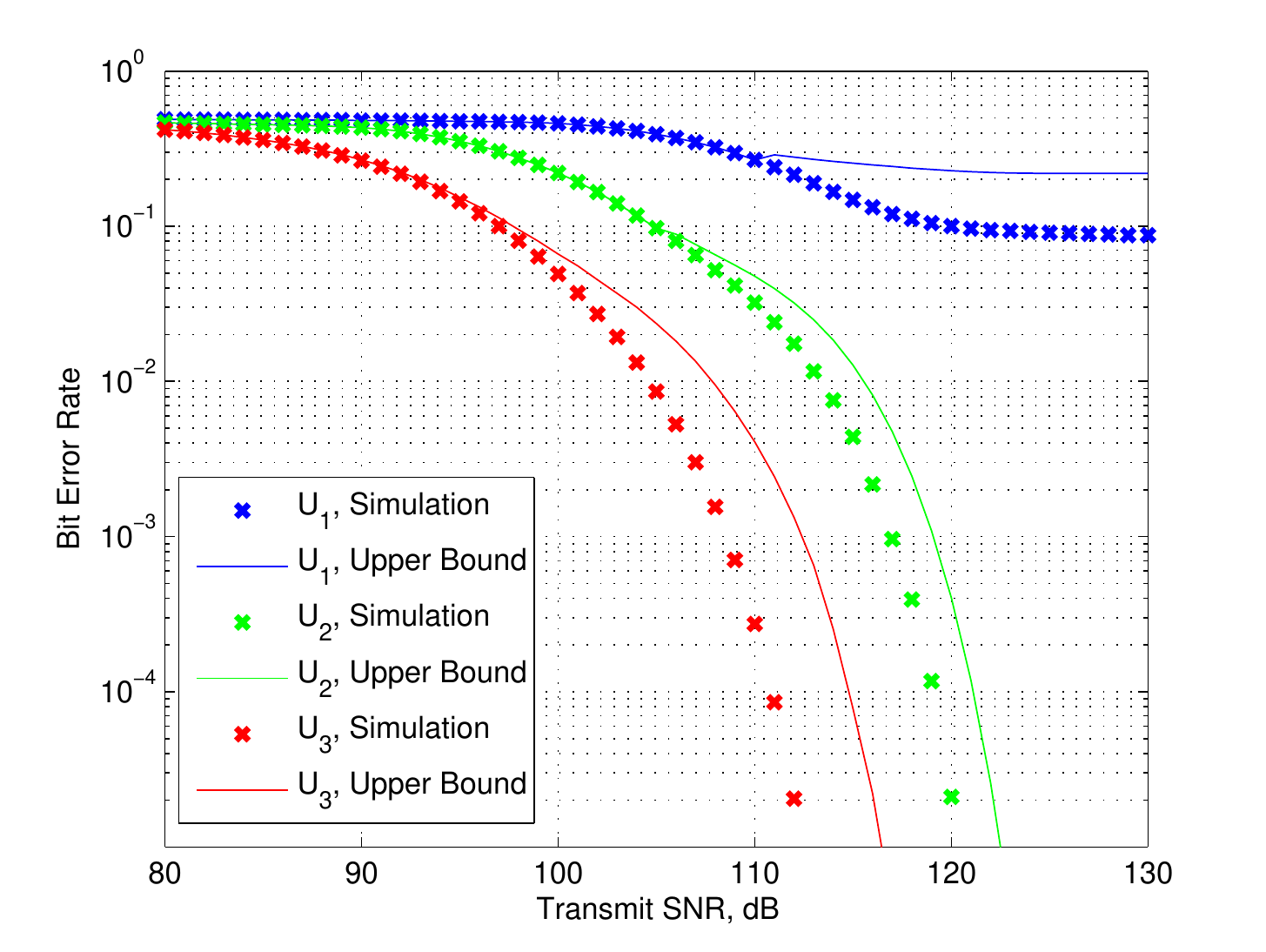}
\caption{BER Performance with Outdated CSI, order changed.}
\label{fig:outdated2}
\end{figure}

Finally, we evaluate  the BER performance of the NOMA-VLC system under dimming control. Fig. \ref{fig:analog} and Fig. \ref{fig:digital} demonstrate the BER of the three users under analog and digital dimming schemes, respectively. As expected, analog intensity dimming leads to    BER performance  degradation, particularly at low $\gamma_d$ values, as lower transmit power is used, which reduces the received SNR. On the contrary, digital dimming employed by means of VOOK leads to BER enhancement, that is substantial when the dimming factor is around $0.5$. This is achieved thanks to  the increase of redundant data bits, which in turn  reduces the probability of incorrect detections. Yet,  this enhancement comes at the cost of reduction in the achievable throughput, which is the main drive for NOMA.    The effect of imperfect CSI under dimming is also demonstrated, which indicates that outdated CSI leads to higher performance degradation compared to noisy CSI, which is in agreement  with the previous results.

\begin{figure}[h]
\center
\includegraphics[width=5in,height=4in]{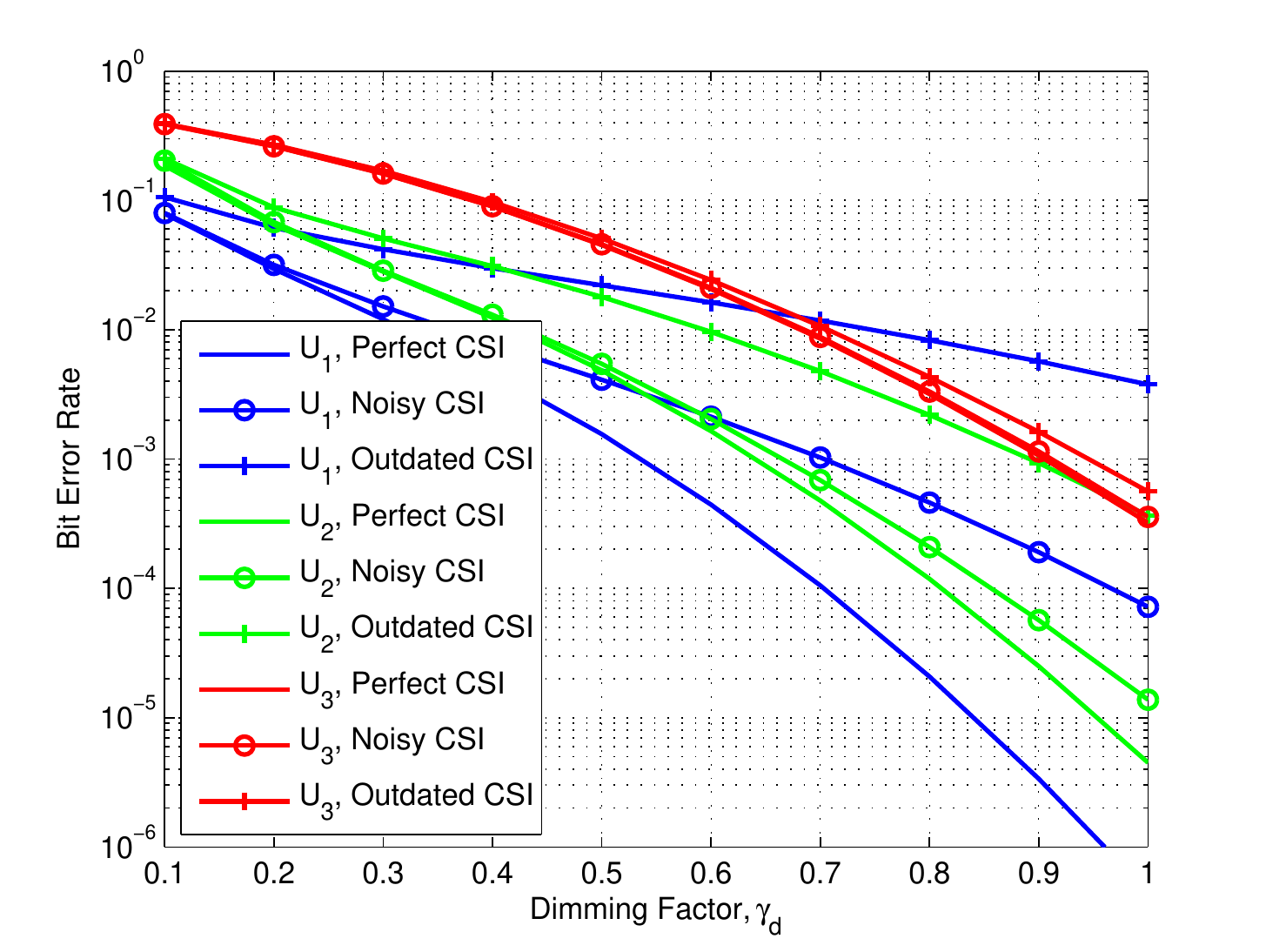}
\caption{BER Performance with analog intensity dimming.}
\label{fig:analog}
\end{figure}

\begin{figure}[H]
\center
\includegraphics[width=5in,height=4in]{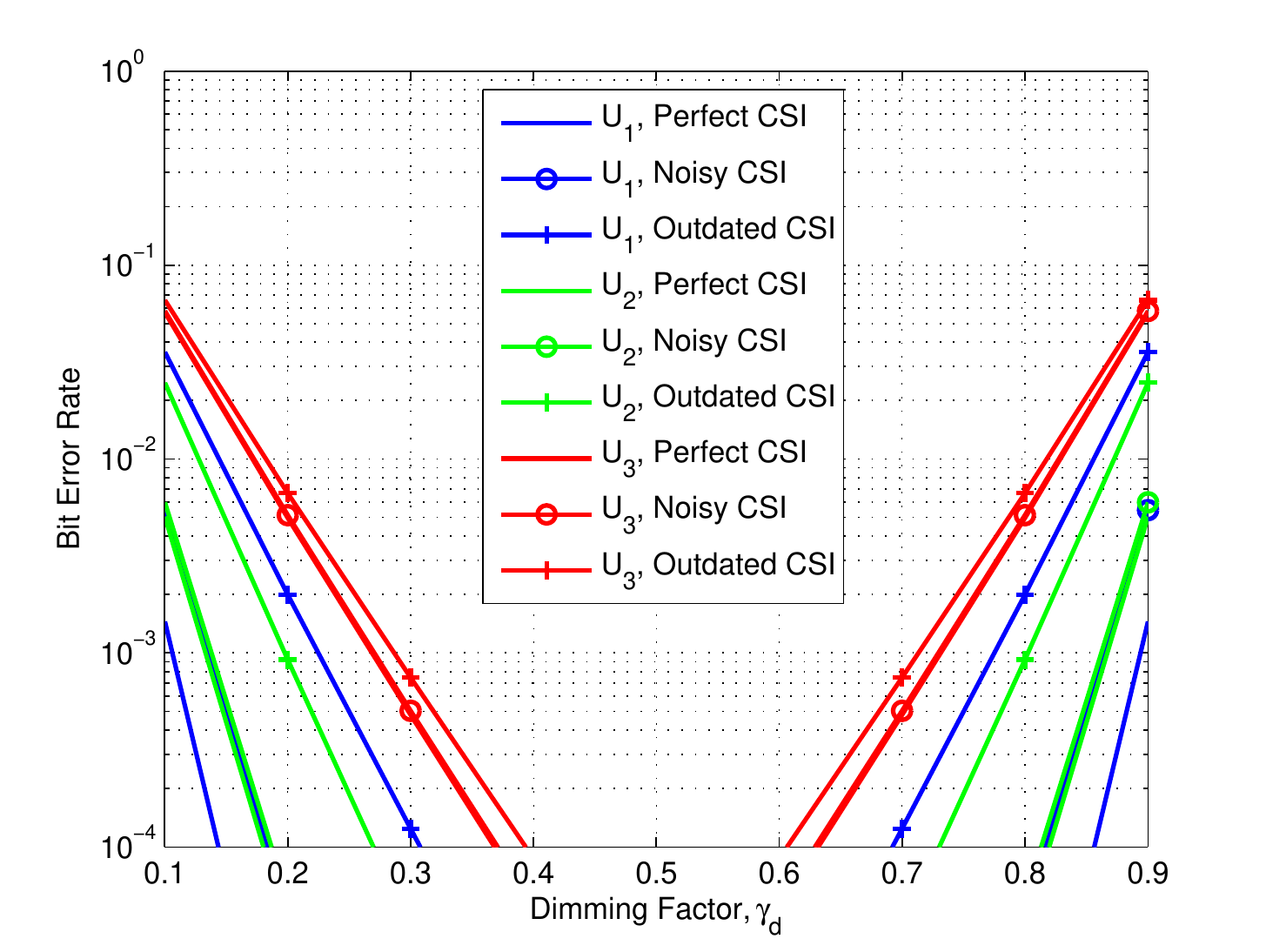}
\caption{BER Performance with digital VOOK dimming.}
\label{fig:digital}
\end{figure}
\section{Conclusions}
\label{sec:conc}
This work was devoted to the analysis of  the BER performance in a downlink VLC network where multiple access was provided by means of NOMA. This was realized for the case of both perfect and imperfect CSI, which  showed that  noisy CSI leads to, as expected,  a  degradation of the system performance. However, this degradation is rather smaller compared to the one created by  outdated CSI which results from the mobility of the user terminal between two CSI updates and cause detrimental performance loss if the ordering of the users' channel gains change between the channel updates. The validity of the derived analytic results was justified  by extensive comparisons with results from respective Monte Carlo simulations. Finally, the offered results provided meaningful insights that are expected to be useful in future design and deployments of VLC systems.

\begin{appendices}

\section{Proof of Theorem 1} \label{app:A}
Using maximum-likelihood (ML) detector, the decoder at the \emph{k}-th receiver decides for the  vector $\hat{s}$  that  minimizes the Euclidean distance between the  received signal vector $y$ and the potential received signals leading to
\begin{equation}\label{equ:ML}
\hat{{s}}= \mbox{arg} \min\limits_{s} \left|{y -  \gamma h_k s}\right|^2.
\end{equation}
Based on this and assuming  that the $U_k$ user cancels  successfully the signals $s_1$ , \ldots , $s_{m-1}$, the error probability at $U_k$ in detecting the signal $s_m$ intended to  user $U_m$ $(1 \leq m \leq  k-1)$ can be expressed as follows: when the transmitted symbol $s_m=0$, the conditional error probability is given by
\begin{equation}\label{equ:perfect0}
\mbox{Pr}_{e_{m\rightarrow k| s_m=0}} = \int_{\frac{1}{2}  \gamma h_k P_m}^{\infty} \mathcal{N}_{y_k}\left( \gamma h_k  \sum \limits_{i=m+1}^{N}  P_i s_i, \sigma_n\right) dy_k
\end{equation}
which can be expressed in closed-form in terms of the $Q-$function, namely
\begin{equation}\label{equ:perfect1}
\mbox{Pr}_{e_{m\rightarrow k| s_m=0}} = \mathcal{Q}\left(\frac{ \gamma h_k}{\sigma_n} \left(\frac{P_m}{2}  - \sum \limits_{i=m+1}^{N}  P_i s_i\right)\right).
\end{equation}
On the contrary, when   $s_m=1$ is transmitted, it follows that
\begin{equation}\label{equ:perfect2}
\mbox{Pr}_{e_{m\rightarrow k| s_m=1}} = \int_{-\infty}^{\frac{1}{2}  \gamma h_k P_m}  \mathcal{N}_{y_k}\left(\gamma h_k \left( P_m + \sum \limits_{i=m+1}^{N}  P_i s_i \right),\sigma_n\right)dy_k
\end{equation}
which after some algebraic manipulations, it can be expressed by the following  closed-form expression
\begin{equation}
\mbox{Pr}_{e_{m\rightarrow k| s_m=1}} = 1- \mathcal{Q}\left(\frac{ \gamma h_k}{\sigma_n} \left(-\frac{P_m}{2} - \sum \limits_{i=m+1}^{N}   P_i s_i\right)\right)
\end{equation}
which with the aid of the identity $Q(-x) = 1 - Q(x)$  can be equivalently re-written as follows:
\begin{equation}
\mbox{Pr}_{e_{m\rightarrow k| s_m=1}} =  \mathcal{Q}\left(\frac{ \gamma h_k}{\sigma_n} \left(\frac{P_m}{2} + \sum \limits_{i=m+1}^{N}   P_i s_i\right)\right).
\end{equation}

It is noted that the above expressions assume perfect cancelation of the first $m-1$ signals. Nevertheless, detection errors may practically occur in any step  of the successive cancelation process. Therefore, considering the contribution of the residual interference inherited by cancelation errors,    equations (\ref{equ:perfect1})-(\ref{equ:perfect2}) can be alternatively re-written as
\begin{equation}\label{equ:perfect3}
\begin{aligned}
\mbox{Pr}_{e_{m\rightarrow k| s_m=0}} &= \int_{\frac{1}{2}  \gamma h_k P_m}^{\infty}  \mathcal{N}_{y_k}\left( \gamma h_k \left( \sum_{j=1}^{m-1}  {e_j} P_j + \sum \limits_{i=m+1}^{N}   P_i s_i \right),\sigma_n\right) dy_k\\
&= \mathcal{Q}\left(\frac{ \gamma h_k}{\sigma_n} \left(\frac{P_m}{2}  - \sum_{j=1}^{m-1}   {e_j} P_j - \sum \limits_{i=m+1}^{N}  P_i s_i\right)\right)
\end{aligned}
\end{equation}
and
\begin{equation}\label{equ:perfect4}
\begin{aligned}
\mbox{Pr}_{e_{m\rightarrow k| s_m=1}}& = \int_{-\infty}^{\frac{1}{2} \gamma h_k P_m }  \mathcal{N}_{y_k}\left( \gamma h_k \left(P_m + \sum_{j=1}^{m-1}   {e_j} P_j + \sum \limits_{i=m+1}^{N}  P_i s_i \right) ,\sigma_n\right) dy_k\\
&= \mathcal{Q}\left(\frac{ \gamma h_k}{\sigma_n} \left( \frac{P_m}{2}  + \sum_{j=1}^{m-1}  {e_j} P_j + \sum \limits_{i=m+1}^{N}   P_i s_i\right)\right)
\end{aligned}
\end{equation}
respectively. Based on the above,  the total error probability in decoding $s_k$ at $U_k$ can be finally  obtained by summing up all conditional error probabilities of the previous detections, which completes the proof.

\section{Proof of Proposition 1} \label{app:B}
According to (\ref{equ:noisyCSI}), the ML decision rule at user $U_k$ is readily expressed as
\begin{equation}\label{equ:ML}
\hat{{s}}= \mbox{arg} \min\limits_{s} \left|{y - \gamma \hat{h}_k s}\right|^2.
\end{equation}
To this effect,  the error probability at $U_k$ in detecting the signal $s_m$ intended to  user $U_m$ $(1 \leq m \leq  k-1)$ can be represented as follows:
\begin{equation}\label{equ:noisy1}
\mbox{Pr}_{e_{m\rightarrow k| s_k=0}}= \int_{\frac{1}{2} \gamma P_m \hat{h}_k}^{\infty}  \mathcal{N}_{y_k}\left( \gamma h_k\left(\sum_{j=1}^{m-1}  {e_j} P_j + \sum \limits_{i=m+1}^{N}   P_i s_i\right),\sigma_n\right) dy_k
\end{equation}
and
\begin{equation}\label{equ:noisy2}
\mbox{Pr}_{e_{m\rightarrow k| s_k=1}} = \int_{-\infty}^{\frac{1}{2} \gamma P_m \hat{h}_k}  \mathcal{N}_{y_k}\left( \gamma h_k\left(P_m + \sum_{j=1}^{m-1}   {e_j} P_j + \sum \limits_{i=m+1}^{N}  P_i s_i\right),\sigma_n\right) dy_k
\end{equation}
where ${e_j} = \hat{s_j} - s_j$. Based on this, and after some algebraic manipulations,   the conditional error probability in (\ref{equ:BER2}) becomes
\begin{equation}
\label{equ:BER_noisy}
\begin{aligned}
\mbox{Pr}_{e_k}|{e_1}, \ldots ,{e_{k-1}}   &=  \frac{1}{2^{N-k+1}} \int_{- \infty}^{\infty} \Bigg( \sum_{i=1}^{2^{N-k}} \mathcal{Q}\bigg(\frac{ \gamma P_k \epsilon_{n}}{2\sigma_n} +\frac{ \gamma h_k}{\sigma_n} (\frac{P_k}{2} - \sum_{j=1}^{k-1} {e_j} P_j - \sum_{l=k+1}^{N} P_l A_{il})\bigg)
 \\ &  +   \sum_{i=1}^{2^{N-k}} \mathcal{Q}\bigg(\frac{ -\gamma P_k \epsilon_{n}}{2\sigma_n} +\frac{ \gamma h_k}{\sigma_n} (\frac{P_k}{2} + \sum_{j=1}^{k-1} {e_j} P_j + \sum_{l=k+1}^{N} P_l A_{il}) \bigg)  \Bigg) \times  \mathcal{N}_{\epsilon_{n}}(0,\sigma_{\epsilon_{n}}^2)dh_k.
 \end{aligned}
\end{equation}

It is evident that the derivation of an exact  closed-form expression to (\ref{equ:BER_noisy}) is subject to analytic evaluation of the involved two integrals. However, this is unfortunately not feasible as these integrals are not available in the open technical literature and in  tabulated form. Yet, a relatively simple closed-form approximation can be derived instead, which appears to be particularly accurate for all values of the considered scenario. To this end, it is recalled that the one dimensional Gaussian $Q$-function can be accurately expressed by an accurate approximation in \cite{Q-function}, namely
\begin{equation}\label{equ:q_function}
\begin{aligned}
&\mathcal{Q}(x)\approx e^{a x^2 + b x + c},  &   \qquad    x \geq 0
\end{aligned}
\end{equation}
where $a, b, c \in \mathbb{R}$ are the corresponding  fitting parameters that are selected according to the fitting criteria.
These values are available in \cite{Q-function} and ensure increased tightness as the corresponding involved absolute and relative errors between the exact and approximated values are particularly  small for the entire range of values of $x$. To this effect, by performing the necessary variable transformation in (\ref{equ:q_function}) and substituting in (\ref{equ:BER_noisy}), one obtains the closed-form expression in (\ref{equ:BER_noisy2}), which completes the proof.
\section{Proof of Proposition 2} \label{app:C}
 Using ML detection, it follows that

\begin{equation}\label{equ:outdated1}
\mbox{Pr}_{e_{m\rightarrow k| s_k=0}}= \int_{\frac{1}{2} \gamma P_m \hat{h}_k}^{\infty} \mathcal{N}_{y_k}\left( \gamma h_k \left( \sum_{j=1}^{m-1}  {e_j} P_j + \sum \limits_{i=m+1}^{N}   P_i s_i\right),\sigma_n\right) dy_k.
\end{equation}
The above integral representation can also be expressed in closed-form in terms of the $Q-$function. Based on this, it immediately follows that
\begin{equation}\label{equ:outdated1}
\mbox{Pr}_{e_{m\rightarrow k| s_k=0}}=\mathcal{Q}\left(\frac{1}{\sigma_n}\left(\gamma  \hat{h}_k\frac{P_m}{2}  - \sum_{j=1}^{m-1}  \gamma h_k  {e_j} P_j - \sum \limits_{i=m+1}^{N}  \gamma h_k  P_i s_i\right)\right)
\end{equation}
which can be equivalently expressed as
\begin{equation}\label{equ:outdated1}
\mbox{Pr}_{e_{m\rightarrow k| s_k=0}}= \mathcal{Q}\left(\frac{\gamma P_m}{2 {\sigma_n} }  \mathcal{E}+\frac{ \gamma h_k}{\sigma_n} \left(\frac{P_m}{2} - \sum_{j=1}^{m-1} {e_j} P_j - \sum_{l=m+1}^{N} P_l A_{il}\right)\right).
\end{equation}
Likewise,
\begin{equation}\label{equ:outdated2}
\mbox{Pr}_{e_{m\rightarrow k| s_k=1}}  = \int_{-\infty}^{\frac{1}{2}P_m  \gamma \hat{h_k}} \mathcal{N}_{y_k} \left( \gamma h_k \left(P_m + \sum_{j=1}^{m-1}   {e_j} P_j + \sum \limits_{i=m+1}^{N}  P_i s_i \right)\sigma_n\right) dy_k
\end{equation}
which can be expressed in closed-form as
\begin{equation}\label{equ:outdated2}
\mbox{Pr}_{e_{m\rightarrow k| s_k=1}}=1 - \mathcal{Q}\left(\frac{1}{\sigma_n}\left(\gamma \hat{h}_k\frac{P_m}{2}  -  \gamma h_k P_m - \sum_{j=1}^{m-1}  \gamma h_k  {e_j} P_j - \sum \limits_{i=m+1}^{N}  \gamma h_k  P_i s_i\right)\right)
\end{equation}
and
\begin{equation}\label{equ:outdated2}
\mbox{Pr}_{e_{m\rightarrow k| s_k=1}}=\mathcal{Q}\left(-\frac{\gamma P_m}{2{\sigma_n} }  \mathcal{E} +\frac{ \gamma h_k}{\sigma_n} \left(\frac{P_m}{2} + \sum_{j=1}^{m-1} {e_j} P_j + \sum_{l=m+1}^{N} P_l A_{il}\right)\right)
\end{equation}
where ${e_j} = \hat{s_j} - s_j$. Based on this and after carrying out  some algebraic manipulations, equation (\ref{equ:BER3}) is deduced, which completes the proof.

\end{appendices}
\bibliographystyle{IEEEtran}
\bibliography{journal}

\end{document}